\definecolor{Gray}{gray}{0.9}
\renewcommand{\boxed}[1]{\text{\fboxsep=.2em\fbox{\m@th$\displaystyle#1$}}}
\newcommand{\p}{\pmb}
\newtheorem{proposition}{Proposition}
\newtheorem{claim}{Claim}
\newtheorem*{conjecture*}{Conjecture}
\newtheorem{corollary}{Corollary}
\newtheorem{remark}{Remark}
\newtheorem{definition}{Definition}
\newtheorem{theorem}{Theorem}
\title{\huge An Umbrella Converse for Data Exchange: \\Applied to
Caching, Computing, and Shuffling}
  \author{Prasad Krishnan, Lakshmi Natarajan, V. Lalitha}
\begin{document}
\maketitle
\begin{abstract}
The problem of data exchange between multiple nodes with storage and communication capabilities models several current multi-user communication problems like Coded Caching, Data Shuffling, Coded Computing, etc. The goal in such problems is to design communication schemes which accomplish the desired data exchange between the nodes with the optimal (minimum) amount of communication load.  In this work, we present a converse to such a general data exchange problem. The expression of the converse depends only on the number of bits to be moved between different subsets of nodes, and does not assume anything further specific about the parameters in the problem. Specific problem formulations, such as those in Coded Caching, Coded Data Shuffling, Coded Distributed Computing, can be seen as instances of this generic data exchange problem. Applying our generic converse, we are able to efficiently recover known important converses in these formulations. Further, for a generic coded caching problem with heterogeneous cache sizes at the clients with or without a central server, we obtain a new general converse, which subsumes some existing results. Finally we relate a `centralized' version of our bound to the known generalized independence number bound in index coding, and discuss our bound's tightness in this context.
\end{abstract}

\let\thefootnote\relax\footnotetext{
 Dr.\ Krishnan and Dr.\ Lalitha are with the Signal Processing \& Communications Research Center, International Institute of Information Technology Hyderabad, India, email:\{prasad.krishnan,\,lalitha.v\}@iiit.ac.in.

Dr.\ Natarajan is with the Department of Electrical Engineering, Indian Institute of Technology Hyderabad, email: lakshminatarajan@iith.ac.in. 

Part of this work was presented at the IEEE Information Theory Workshop 2020 held virtually from 11–15, April 2021.

}
\section{Introduction and Main Result} 
\label{secsystemmodel}
Consider a system of $K$ \textit{nodes}, denoted by $[K]\triangleq \{1,\hdots,K\}$, each of which have (not necessarily uniform) storage. The nodes can communicate with each other through a noiseless bus link, in which transmissions of any node is received by all others. Each node possesses a collection of data symbols (represented in bits) in its local storage, and demands another set of symbols present in other nodes. We formalize this as a \textit{data exchange problem}.
\begin{definition}
\label{dataexchangeproblemdefinition}
\textcolor{black}{A \underline{data exchange problem} on a set of $K$ nodes involving a collection $B$ of information bits is given by the following}
\begin{itemize}
    \item \textcolor{black}{a collection $\{C_i:i\in[K]\},$ where $C_i\subset B$ denotes the subset of data present in node $i$,}
    \item a collection $\{D_i:i\in[K]\}$ where \textcolor{black}{$D_i\textcolor{black}{\subset} \cup_{j\neq i}C_j\setminus C_i$} denotes the set of bits demanded by node $i$. 
\end{itemize}
\end{definition}

The above data exchange problem models a number of cache-enabled multi-receiver communication problems studied recently in the coding theory community, including Coded Caching \cite{MaN}, Coded Distributed Computing \cite{CodedMapreduce,FundLimitsDistribCom}, Coded Data Shuffling \cite{speedingup,FundLimitsofDecentralizedDataShuffling,ElmahdyMohajerCDS_TIT}, and Coded Data Rebalancing \cite{CodedDataRebalancing}. \textcolor{black}{In \cite{Salim_CoopDataExch}, a special case of our general problem here was considered in the name of \textit{cooperative data exchange}, where the goal was to reach a state in which all nodes have all the data in the system.}

A solution to a given data exchange problem involves communication between the nodes. Each node $i$ encodes the symbols in $C_i$ into a codeword of length $l_i$ and sends it to all other nodes. The respective demanded symbols at any node is then to be decoded using the received transmissions from all the other nodes \textcolor{black}{and the node's own content.} 

Formally, a \textit{communication scheme} for the given data exchange problem consists of a set of encoding functions $\Phi\triangleq \{\phi_i:i\in[K]\}$ and decoding functions $\Psi\triangleq \{\psi_i: i\in[K]\}$, defined as follows. 
\begin{align*}
    \phi_i:&\{0,1\}^{|C_i|}\rightarrow \{0,1\}^{\textcolor{black}{l_i}}, (\text{\textcolor{black}{for some non-negative integer}}~ l_i)\\
    \psi_i:&\{0,1\}^{|C_i|}\times  \{0,1\}^{\sum\limits_{j\neq i}{\textcolor{black}{l_j}}}\rightarrow \{0,1\}^{|D_i|},
\end{align*}
such that  
\begin{align*}
\psi_i\left(C_i,\{\phi_j(C_j):j\neq i\}\right)=D_i.  
\end{align*}

The \textit{communication load} of the above scheme is defined as the total number of bits communicated, i.e., 
$$L(\Phi,\Psi)\triangleq \sum_{i\in[K]}l_i.$$ The optimal communication load is then denoted by $$L^*\triangleq \min\limits_{\Phi,\Psi}L(\Phi,\Psi).$$

The central result in this work is Theorem \ref{genericlowerboundTheorem} in Section \ref{dataexchangeconversesubsection}, which is a lower bound on the optimal communication load $L^*$. Using this lower bound, we recover several important converse results of cache-enabled communication problems studied in literature, including Coded Caching (Section \ref{codedcaching}), Data Shuffling (Section \ref{decentralizedcodeddatashuffling}), and Distributed Computing (Section \ref{codeddistributedcomputing}).  In each of these sections, we briefly review each setting and then apply Theorem \ref{genericlowerboundTheorem} to recover the respective converses. As a result, the proofs of these existing converses are also made simpler than what is already available in literature for the respective settings. The close relationship between these problems is quite widely known. This work gives a further formal grounding to this connection, by abstracting the common structure of these converses into a general form, which can potentially be applied to other new data exchange problems as well.

Apart from recovering existing results, more importantly we also use our data exchange lower bound to obtain \textit{new} tight converse results for some settings, while improving tightness results of some known bounds. Specifically, we present a new converse for a generic coded caching setting with multi-level cache sizes. Using this we are able to close the gap to optimality for some known special cases of this generic setting. (Section \ref{codedcachinggeneralsubsection}).
Finally, in Section \ref{indexcodingrelationship}, we show the relationship between a `centralized' version of our data exchange lower bound and an existing bound for index coding known as the $\alpha$-bound or the generalized independence number bound \cite{ErrorCorrectionIndexCoding}. In general, we find that our bound is weaker than the $\alpha$-bound. However, for unicast index coding problems, we identify the precise conditions under which our data exchange bound is equal to the $\alpha$-bound. 

\textit{Notations:} For positive integer \textcolor{black}{$a$}, let \textcolor{black}{$[a]\triangleq \{1,\hdots,a\}$}. For a set $S$, we denote by $S\setminus k$ the set of items in $S$ except for the item $k$, and represent the union $S\cup \{k\}$ as $S\cup k$. The binomial coefficient is denoted by $\binom{n}{k}$, which is zero if $k>n.$ \textcolor{black}{The set of all $t$-sized subsets of a set $A$ is denoted by $\binom{A}{t}.$}

\subsection{A Converse For The Data Exchange Problem}
\label{dataexchangeconversesubsection}
In this subsection we will obtain a lower bound on the optimal communication load of the general data exchange problem defined in Section \ref{secsystemmodel}. This is the central result of this work. The predecessor to the proof technique of our data exchange lower bound is in the work of \cite{FundLimitsDistribCom}, which first presented an induction based approach for the converse of the coded distributed computing setting. Our proof uses a similar induction technique.


Given a data exchange problem and \textcolor{black}{for $P,Q\subset [K]$} such that $P\neq \phi$, let $a_P^Q$ denote the number of bits \textcolor{black}{which are stored in every node in subset of nodes $Q$ and stored in no other node, and demanded by every node in the subset $P$ and demanded by no other node}, i.e.,
\begin{align}
\label{eqndefinitionofapq}    
a_P^Q\triangleq |(\cap_{i\in P}D_i)\cap (\cap_{j\in Q}C_j)\setminus \left(\cup_{j\notin Q}C_j)\cup(\cup_{i\notin P}D_i)\right)|.
\end{align}


\textcolor{black}{Note that, by defnition, $a_P^Q=0$ under the following conditions. \begin{itemize}
    \item If $P\cap Q\neq \phi$, as the bits demanded by any node are absent in the same node. 
    \item If $Q=\phi$, by Definition \ref{dataexchangeproblemdefinition}. 
    \end{itemize}} 

\textcolor{black}{Theorem \ref{genericlowerboundTheorem} gives a lower bound on the optimal communication load of a given data exchange problem.  The proof of the theorem is relegated to Appendix \ref{appendix:proof_main_thm}.} The idea of the proof is as follows. If we consider only two nodes in the system, say $[K]=\{1,2\}$, then each of the $2$ nodes has to transmit whatever bits it has which are demanded by the other node, i.e. $L^*\geq a_{\{2\}}^{\{1\}}+a_{\{1\}}^{\{2\}}$. The proof of the theorem uses this as a base case and employs an induction technique to obtain a sequence of cut-set bounds leading to the final expression.

\begin{theorem}
\label{genericlowerboundTheorem}
$$L^*\geq \sum_{P\subset [K]}\sum_{Q\subset[K]\setminus P}\frac{|P|}{|P|+|Q|-1} a_{P}^{Q}.$$
\end{theorem}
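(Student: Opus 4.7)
My plan is to prove Theorem~\ref{genericlowerboundTheorem} by induction on $K$, with cut-set inequalities as the main tool, following the approach of \cite{FundLimitsDistribCom}. The central ingredient is the cut-set lemma: for any non-empty $S \subset [K]$,
\[
\sum_{j \notin S} l_j \;\geq\; \Bigl|\Bigl(\bigcup_{i \in S} D_i\Bigr) \setminus \Bigl(\bigcup_{i \in S} C_i\Bigr)\Bigr| \;=\; \sum_{\substack{(P,Q):\, P \cap S \neq \emptyset,\\ Q \cap S = \emptyset}} a_P^Q.
\]
This holds because each node $i \in S$ decodes $D_i$ from $C_i$ and the received transmissions, so $\bigcup_{i \in S} D_i$ is a deterministic function of $\bigcup_{i \in S} C_i$ (which in turn determines $\{X_j\}_{j \in S}$) together with the external transmissions $\{X_j\}_{j \notin S}$. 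The base case $K=2$ follows at once: applying the lemma to $S=\{1\}$ and $S=\{2\}$ gives $l_2 \geq a_{\{1\}}^{\{2\}}$ and $l_1 \geq a_{\{2\}}^{\{1\}}$, whose sum matches the theorem (since $\tfrac{|P|}{|P|+|Q|-1}=1$ when $|P|=|Q|=1$).

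For the inductive step, assuming the bound holds for all data exchange problems on fewer than $K$ nodes, I would for each $k \in [K]$ combine two lower bounds: (i) the cut-set lemma with $S = [K] \setminus \{k\}$, which bounds $l_k$ from below by the count of bits stored uniquely at $k$; and (ii) the inductive hypothesis applied to the residual $(K-1)$-node data exchange sub-problem on $[K] \setminus \{k\}$, obtained by treating $X_k$ as pre-shared and adjoined to every remaining node's content, which bounds $\sum_{j \neq k} l_j$. Writing $L^* = l_k + \sum_{j \neq k} l_j$ then yields $K$ inequalities; summing with a suitable weighting and dividing through, I would track how each original bit of type $(P, Q)$ contributes across the $K$ reductions: its type is unchanged when $k \notin P \cup Q$ (coefficient $\tfrac{|P|}{|P|+|Q|-1}$); it migrates to $(P \setminus \{k\}, Q)$ when $k \in P$ with $|P| > 1$ (coefficient $\tfrac{|P|-1}{|P|+|Q|-2}$); to $(P, Q \setminus \{k\})$ when $k \in Q$ with $|Q| > 1$ (coefficient $\tfrac{|P|}{|P|+|Q|-2}$); and is absorbed entirely by the $l_k$ cut-set bound when $P = \{k\}$ or $Q = \{k\}$. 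The target coefficient $\tfrac{|P|}{|P|+|Q|-1}$ should emerge from a combinatorial identity assembling these weighted pieces.

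The main obstacle will be twofold. First, formalizing the residual sub-problem requires care because $X_k$ is a general (possibly coded) function of $C_k$; the residual therefore does not literally decompose into bits of clean $(P', Q')$ types, and the inductive hypothesis must be invoked information-theoretically rather than by a naive relabeling of bit counts. Second, producing exactly the coefficient $\tfrac{|P|}{|P|+|Q|-1}$ demands the correct (non-uniform) weighting over $k$: simple uniform averaging produces aggregate coefficients that differ from the target, and small examples already show that a flat convex combination of single-cut-set bounds can fall strictly below $\tfrac{|P|}{|P|+|Q|-1}$ in certain parameter regimes. Identifying the nested weighting that closes the induction, and verifying the resulting combinatorial identity term by term, will be the most technical aspect of the argument.
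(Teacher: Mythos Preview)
Your plan has a concrete error in the case analysis that makes the induction fail. You claim that bits of type $(P,Q)$ with $P=\{k\}$ are ``absorbed entirely by the $l_k$ cut-set bound,'' but the cut-set with $S=[K]\setminus\{k\}$ lower-bounds $l_k$ by the bits \emph{stored} only at $k$, i.e.\ those with $Q=\{k\}$, not those with $P=\{k\}$. A bit demanded solely by node $k$ contributes nothing to $A_k$ (it is not stored at $k$) and nothing to the residual on $[K]\setminus\{k\}$ (no remaining node wants it), so the $k$-th inequality gives it coefficient $0$ while the target is $1/|Q|$. Concretely, for $P=\{i\},Q=\{j\}$ your $k=i$ inequality contributes $0$, so any convex combination $\sum_k w_k(A_k+B_k)$ gives coefficient $1-w_i$; forcing this to reach the target $1$ requires $w_i=0$ for every $i$, which is impossible. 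Thus no weighting---uniform or otherwise---closes the gap, contrary to your second bullet. A similar issue arises for $k\in Q$ with $|Q|\ge 2$: if you adjoin $C_k$ the bit becomes side information everywhere and disappears (it does not migrate to $(P,Q\setminus\{k\})$ as you wrote), and if you adjoin only $X_k$ you cannot invoke the inductive hypothesis because the residual has no clean $a_{P'}^{Q'}$ structure.

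The paper's proof avoids all of this by shifting to a \emph{receiver-side} entropy statement. The inductive claim is $H(X_S\mid Y_{\bar S})\ge\sum_{P,Q\subset S}\tfrac{|P|}{|P|+|Q|-1}a_P^Q$, proved by induction on $|S|$. For each $k\in S$ one writes $H(X_S\mid Y_{\bar S})=H(X_k\mid Y_{\bar S})+H(X_{S\setminus k}\mid Y_{\bar S},X_k)$, replaces $X_k$ by $C_k$ in the conditioning, and then uses decodability to peel off $H(D_k\mid C_k,Y_{\bar S})$ before recursing on $H(X_{S\setminus k}\mid Y_{\overline{S\setminus k}})$. The term $H(D_k\mid C_k,Y_{\bar S})$ captures \emph{all} bits demanded by $k$ (every $a_P^Q$ with $k\in P$), which is exactly what your transmitter-side cut-set misses. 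The second key point is that the averaging over $k$ \emph{is} uniform: summing $H(X_k\mid Y_{\bar S})$ over $k\in S$ and using subadditivity returns $H(X_S\mid Y_{\bar S})$ itself, yielding a self-referential inequality whose rearrangement produces the crucial $\tfrac{1}{|S|-1}$ factor. That factor, not any non-uniform weighting, is what makes the combinatorics close exactly.
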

\textcolor{black}{Theorem \ref{genericlowerboundTheorem}, along with the observation that $a_\phi^Q=0=a^\phi_P$ gives us the following corollary, which is \textcolor{black}{a restatement} of Theorem \ref{genericlowerboundTheorem}. This will be utilized in some forthcoming results. }
\begin{corollary}
\label{corrtomainthm}
Let $$n(p,q)\triangleq \sum\limits_{\substack{P,Q\subset[K]:\\ |P|=p,|Q|=q,P\cap Q=\phi}} a_P^Q$$ denote the total number of bits present exactly in $q$ nodes and demanded exactly by $p$ (other) nodes. Then 
\begin{align}
\label{eqn220}
    L^* \geq 
    \sum_{p=1}^{K-1}\sum_{q=1}^{K-p}\frac{p}{p+q-1}n(p,q).
\end{align}
\end{corollary}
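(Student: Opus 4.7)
The plan is to obtain Corollary \ref{corrtomainthm} as a direct bookkeeping rearrangement of the bound in Theorem \ref{genericlowerboundTheorem}, partitioning the double sum over subsets $P,Q$ by their cardinalities $|P|=p$ and $|Q|=q$. Since the statement is purely a restatement under a change of summation index, the argument will be short and computational; the only care needed is in identifying which terms vanish and in getting the range of $(p,q)$ correct.

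I will begin from the right-hand side of Theorem \ref{genericlowerboundTheorem}, namely $\sum_{P\subset[K]}\sum_{Q\subset[K]\setminus P}\tfrac{|P|}{|P|+|Q|-1}a_P^Q$. The first observation is that all terms with $P=\phi$ or $Q=\phi$ contribute zero: the case $Q=\phi$ follows from the definition of $a_P^Q$ already noted in the excerpt (a bit stored in no node cannot be demanded), and the case $P=\phi$ is immediate because one then has an intersection of demand sets indexed by the empty collection which, combined with the convention $a_\phi^Q=0$ noted just before the statement of the corollary, contributes nothing. Hence the outer sum may be restricted to nonempty $P$ and the inner sum to nonempty $Q$ disjoint from $P$.

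Next I will regroup. Writing $p=|P|$ and $q=|Q|$, the disjointness $P\cap Q=\phi$ combined with $P,Q\subset[K]$ forces $p+q\le K$, so $p$ ranges over $\{1,\ldots,K-1\}$ and, for each such $p$, $q$ ranges over $\{1,\ldots,K-p\}$ (when $p=K$ one would need $q=0$, already excluded). Crucially, the coefficient $\tfrac{|P|}{|P|+|Q|-1}=\tfrac{p}{p+q-1}$ depends only on the sizes and not on the subsets themselves, so it can be pulled outside the inner sum over ordered pairs $(P,Q)$ of disjoint subsets with the prescribed sizes. What remains inside is exactly
\begin{equation*}
\sum_{\substack{P,Q\subset[K]:\\ |P|=p,\,|Q|=q,\,P\cap Q=\phi}} a_P^Q \;=\; n(p,q),
\end{equation*}
by the very definition of $n(p,q)$ in the statement of the corollary. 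Substituting yields the desired inequality \eqref{eqn220}.

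There is essentially no obstacle here; the only pitfall worth flagging is keeping the bookkeeping clean, in particular verifying the upper limit $q\le K-p$ (which follows from disjointness, not from any nontrivial property of $a_P^Q$) and confirming that the vanishing of $a_\phi^Q$ and $a_P^\phi$ justifies replacing ``$P\subset[K]$, $Q\subset[K]\setminus P$'' by ``$P,Q$ nonempty, disjoint''. With these remarks in place the proof reduces to a one-line swap of summation variables.
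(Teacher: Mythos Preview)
Your proposal is correct and takes essentially the same approach as the paper: the paper states only that Corollary~\ref{corrtomainthm} follows from Theorem~\ref{genericlowerboundTheorem} together with the observation $a_\phi^Q=0=a_P^\phi$, and your write-up simply makes this regrouping by cardinalities explicit. The bookkeeping (disjointness forcing $p+q\le K$, the coefficient depending only on $(p,q)$) is exactly what the paper leaves implicit.
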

\begin{remark}
In \cite{QYuSli_OptimalCDC}, the authors presented an essentially identical bound (Lemma 1, \cite{QYuSli_OptimalCDC}) as Corollary \ref{corrtomainthm} in the setting of coded distributed computing. The proof given in \cite{QYuSli_OptimalCDC} for this lemma also generalizes the arguments presented in \cite{FundLimitsDistribCom} as does this work. Our present work considers a general data exchange problem and derives the lower bound in Theorem \ref{genericlowerboundTheorem} for the communication load in such a setting. \textcolor{black}{ We had derived this lower bound independently in the conference version of this paper \cite{umbrellaITW}, and only recently came to know about the bound in \cite{QYuSli_OptimalCDC}.} In subsequent sections, we show how to use this bound to recover converses for various multi-terminal communication problems considered in literature in recent years, and also obtain new converses for some settings. We also discuss, in Section \ref{indexcodingrelationship}, the looseness of Theorem \ref{genericlowerboundTheorem} by considering a centralized version of the data exchange problem and comparing our bound with the \textcolor{black}{generalized independence number bound} in index coding. These are the novel features of our present work, compared to the bound in Lemma 1 of \cite{QYuSli_OptimalCDC}. 
\end{remark}
\section{Coded Caching}
\label{codedcaching}
In this section, we apply Theorem \ref{genericlowerboundTheorem} to recover the lower bound obtained in \cite{CodedCachingOptimality} for the problem of coded caching introduced in \cite{MaN}. Further using Theorem \ref{genericlowerboundTheorem}, we prove in Section \ref{codedcachinggeneralsubsection} a new converse for a generic coded caching problem under multiple cache size settings. This provides new converses for some of the existing settings in literature, and also tightens bounds in some others. In Section \ref{codedcachingmultiplefilerequests}, we recover a converse for coded caching with multiple file requests. In Section \ref{decentralizedcodedcaching}, we recover the converse for coded caching with decentralized cache placement. 

We now describe the main setting of this section. In the coded caching system introduced in \cite{MaN}, there is one server connected via a noiseless broadcast channel to $K$ clients indexed as $[K]$. The server possesses $N$ files, each of size $F$ bits, where the files are indexed by $[N]$. Each client contains local storage, or a \textit{cache}, of size $MF$ bits, for some $M\leq N$. We call this a $(K,M,N,F)$ coded caching system. The coded caching system operates in two phases: in the \textit{caching phase} which occurs during the low-traffic periods, the caches of the  clients are populated by the server with some (uncoded) bits of the file library. This is known as \textit{uncoded prefetching}. In this phase, the demands of the clients are not known. In the \textit{delivery phase} which occurs during the high-traffic periods, each client demands one file from the server, and the server makes transmissions via the broadcast channel to satisfy the client demands. The goal of coded caching is to design the caching and the delivery phase so that the worst case communication load in the delivery phase is minimized. 

For this system model, when \textcolor{black}{$\frac{MK}{N}\in {\mathbb Z}$,} the work \cite{MaN} proposed a caching and delivery scheme which achieves a communication load  (normalized by the size of the file $F$) given by $K(1-\frac{M}{N})\min\left\{\frac{1}{1+\frac{MK}{N}},\frac{N}{K}\right\}$.  In \cite{CodedCachingOptimality}, it was shown for any coded caching scheme with uncoded cache placement, communication load in the worst case (over all possible demand choices of clients) is lower bounded by $L^*\geq \frac{K(1-\frac{M}{N})F}{1+MK/N}$. Therefore it was shown that, when $K\leq N$ and \textcolor{black}{$\frac{MK}{N}\in {\mathbb Z}$}, the scheme of \cite{MaN} is optimal.

In the present section we give another proof of the lower bound for coded caching derived in \cite{CodedCachingOptimality}. The new proof that we give follows three simple steps, which are also generic to all other similar lower bounds proved in this paper for various other settings. 
\begin{itemize}
    \item \textcolor{black}{\textbf{Apply Theorem \ref{genericlowerboundTheorem}}} to the present setting obtaining a lower bound on the communication load, assuming an arbitrary choice of demands across the clients. 
    \item \textcolor{black}{\textbf{`Symmetrize' the lower bound}} by averaging over an appropriate choice of demand vectors. 
    \item \textcolor{black}{\textbf{ Refine the averaged bound}} by imposing the constraints on the total storage and using convexity of terms inside the averaged bound to obtain the final expression of the bound.
\end{itemize}
\textcolor{black}{The reader can refer to the proof of Theorem \ref{CodedCachingLowerbound} as an illustration, as we mark these three stages in the proof.} 

We now proceed with restating the lower bound from \cite{CodedCachingOptimality}. Note that these converses are typically normalized by the file size in literature, however we recall them in their non-normalized form, in order to relate them with our data exchange problem setting. 
\begin{theorem}\cite{CodedCachingOptimality}
\label{CodedCachingLowerbound}
Consider a $(K,M,N,F)$ coded caching system with $K\leq N$. The worst case communication load $L^*$ in the delivery phase (under any uncoded prefetching scheme) satisfies 
$$
L^*\geq \frac{K(1-M/N)}{1+MK/N}F.
$$
\end{theorem}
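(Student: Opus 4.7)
The plan follows the three-step recipe the paper isolates just before the theorem statement. First, I cast the coded caching system as a data exchange problem on $K+1$ nodes by appending the server as node $K+1$, which stores every bit of every file and demands nothing; since allowing the clients to also broadcast can only reduce the optimal load, any lower bound on the $(K+1)$-node data exchange load is a valid lower bound on $L^*$. Under uncoded placement, each file $W_n$ splits as $\{W_{n,S}:S\subset[K]\}$, where $W_{n,S}$ denotes the bits of $W_n$ cached in exactly the clients of $S$. Fixing a demand vector $\mathbf{d}=(d_1,\ldots,d_K)$, every bit of $W_{n,S}$ is stored exactly in $Q=S\cup\{K+1\}$ and demanded exactly by $P(n,S,\mathbf{d})\triangleq\{k\notin S:d_k=n\}$. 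Since $|Q|=|S|+1$, Theorem \ref{genericlowerboundTheorem} gives
$$L^*(\mathbf{d})\;\geq\;\sum_{n=1}^{N}\sum_{S\subset[K]}\frac{|P(n,S,\mathbf{d})|}{|P(n,S,\mathbf{d})|+|S|}\,|W_{n,S}|.$$

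Second, I symmetrize by averaging the right-hand side uniformly over all demand vectors $\mathbf{d}$ with pairwise distinct entries (legitimate because the worst-case load dominates the average). Distinctness forces $|P(n,S,\mathbf{d})|\in\{0,1\}$, and a short counting argument shows that for fixed $n$ and $|S|=s$, $\Pr[|P(n,S,\mathbf{d})|=1]=(K-s)/N$. Defining $x_s\triangleq\sum_n\sum_{|S|=s}|W_{n,S}|$ as the total size of subfiles cached in exactly $s$ clients, the averaged bound collapses to
$$L^*\;\geq\;\sum_{s=0}^{K-1}\frac{K-s}{N(1+s)}\,x_s.$$

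Third, I refine using the two aggregate constraints $\sum_{s}x_s=NF$ (total library size) and $\sum_{s}s\,x_s\leq KMF$ (summing the per-client cache bounds). Writing the coefficient as $f(s)=\frac{1}{N}\bigl(\frac{K+1}{1+s}-1\bigr)$, the function $f$ is strictly convex and strictly decreasing on $[0,K-1]$, so Jensen's inequality applied to the probability measure $x_s/(NF)$ yields $L^*\geq NF\cdot f(\bar s)$ with $\bar s=\frac{1}{NF}\sum_s s\,x_s\leq KM/N$; monotonicity of $f$ then gives $L^*\geq NF\cdot f(KM/N)=K(1-M/N)F/(1+MK/N)$, matching the claim. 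The only real obstacle is the bookkeeping in step one: identifying each bit's $(P,Q)$-signature correctly once the server is appended as node $K+1$, and verifying that the $|P|=0$ terms drop out of the Theorem \ref{genericlowerboundTheorem} sum so that only the subfiles actually demanded by some outside client contribute. Steps two and three are routine once convexity and monotonicity of $f$ are checked on the relevant domain.
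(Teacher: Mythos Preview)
Your proposal is correct and follows essentially the same three-step approach as the paper: cast the delivery phase as a $(K+1)$-node data exchange problem with the server adjoined, apply Theorem~\ref{genericlowerboundTheorem} to obtain a per-demand bound, symmetrize over distinct-demand vectors, and finish with the convexity/monotonicity argument for $(K-x)/(1+x)$ under the file-size and cache-size constraints. The only cosmetic difference is that you average over \emph{all} distinct-demand vectors whereas the paper averages over just the $N$ cyclic shifts in~(\ref{eqn231}); both choices produce the identical intermediate expression $\sum_s \frac{K-s}{N(1+s)} x_s$, so the routes coincide from that point on.
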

\begin{proof}[Proof based on Theorem \ref{genericlowerboundTheorem}]
Consider that after some cache placement phase, the $K$ client demands in the delivery phase are given by a demand vector $\p{d}=(d_1,\dots,d_K)$, where \textcolor{black}{$d_k\in[N]$} denotes the index of the demanded file of the client $k$.

We observe that a $(K,M,N,F)$ coded caching problem during the delivery phase satisfies Definition \ref{dataexchangeproblemdefinition} of a data exchange problem on $K+1$ nodes \textcolor{black}{indexed as $\{0,1,\hdots,K\}$, where we give the index $0$ to the server node and include this in the data exchange system.} \textcolor{black}{Before proceeding, we remark that the below proof gives a lower bound where all $K+1$ nodes in the system \textit{may} transmit, whereas in the coded caching system of \cite{MaN} \textit{only} the server \textit{can} transmit. Thus any lower bound that we obtain in this proof applies to the setting in \cite{MaN} also. } 

Clearly in the equivalent data exchange problem, node $0$ (the server) does not demand anything, but has a copy of all the bits in the entire system.
Also, we are interested in bounding the worst case load over all possible demanded files at the clients. Hence we assume that all the demanded files are distinct, i.e., $d_k\neq d_{k'}$ for all $k\neq k'$. With these observations, we have by definition of $a_P^Q$ in (\ref{eqndefinitionofapq})
\begin{align}
\label{eqn221}
a_P^Q=0,~~ \text{~if~} 0 \notin Q \text{~or if~} P\notin \binom{[K]}{1},    
\end{align}
where the quantities $a_P^Q$ clearly depend on the demand vector $\p d$. 

We thus use a new set of variables: for each $k \in [K]$, $Q \subset  [K]$, and given demands $\p{d}=(d_1,\dots,d_K)$ let $c_k^Q(\p{d})$ denote the number of bits demanded by receiver node $k$ that are available only at the nodes $Q \cup \{0\}$, i.e., 
\begin{align}
\label{eqn222}
c_k^Q(\p{d})\triangleq a_{\{k\}}^{Q\cup 0}.
\end{align}
\textcolor{black}{\textbf{Applying Theorem \ref{genericlowerboundTheorem}:}} By Theorem \ref{genericlowerboundTheorem}, we have the following lower bound for demand vector $\p d$
\begin{align}
\nonumber
L^* &\geq \sum_{P \subset [K] \cup \{0\}} \sum_{Q' \subset [K] \cup \{0\} \setminus P} \frac{|P|}{|P| + |Q'| - 1}a_P^{Q'}\\
\label{eqn223}
&~~~~=\sum_{k=1}^{K} \sum_{Q \subset [K] \setminus k} \frac{1}{|Q|+1} c_k^Q (\p{d}),
\end{align}
where (\ref{eqn223}) is obtained from (\ref{eqn221}) and (\ref{eqn222}).

\textcolor{black}{\textbf{`Symmetrizing' (\ref{eqn223}) over carefully chosen demand vectors:}} We now consider the averaging of bounds of type (\ref{eqn223}) over a chosen collection of $N$ demand vectors, given by 

\textcolor{black}{\begin{align}
\label{eqn231}
{\cal D}&\triangleq \left\{\Big(j\oplus_N 0,j\oplus_N 1,\hdots,j\oplus_N (K-1)\Big):j=0,\hdots,N-1\right\}
\end{align}
}
where $j\oplus_{N}i\triangleq ((j+i) \! \mod N) +1.$

That is, ${\cal D}$ contains the demand vectors consisting of consecutive $K$ files, starting with each of the $N$ files as the demand of the first client. 

Averaging (\ref{eqn223}) through \textcolor{black}{the set of $N$ demand vectors in} ${\cal D}$, the lower bound we obtain is
\begin{align}
\label{eqn233}
L^*&\geq \frac{1}{N}\sum_{{\p d}\in{\cal D}}\sum_{k=1}^{K} \sum_{Q \subset [K] \setminus k} \frac{1}{|Q|+1} c_k^Q (\p{d}).
\end{align}
Let $b_n^Q$ denote the number of bits of file $n$ stored only in $Q \cup \{0\}$. Then, in the above sum, $b_n^Q=c_k^Q(\p{d})$ if and only if $d_k=n$. This happens precisely once in the collection of $N$ demand vectors in $\cal D$. Thus we have
\begin{align}
\nonumber
L^*&\geq \frac{1}{N}\sum_{k=1}^{K} \sum_{Q \subset [K] \setminus k}\sum_{{\p d}\in{\cal D}} \frac{1}{|Q|+1} c_k^Q (\p{d})\\
\label{eqn224}
&= \frac{1}{N}\sum_{k=1}^{K} \sum_{Q \subset [K] \setminus k} \sum_{n=1}^N \frac{1}{|Q|+1} b_n^Q\\
\label{eqn225}
& = F\sum_{Q \subset [K]} \sum_{n=1}^N \left(\frac{K-|Q|}{|Q|+1}\right)\frac{b_n^Q}{NF} 
\end{align}
where (\ref{eqn225}) follows as for a fixed $n$ and $Q$, $k\in[K]\setminus Q$ in (\ref{eqn224}), and by multiplying and dividing by $F$. 

\textcolor{black}{\textbf{Refining the bound (\ref{eqn225}) by using the constraints of the setting:}} Now, by definition, $\sum_n \sum_{Q\subset[K]} b_n^Q=NF$, and thus $b_n^Q/NF:n\in[N],Q\subset[K]$, denotes a probability mass function. Also, $\sum_{Q\subset[K]} |Q| b_n^Q\leq KMF.$ As $(K-x)/(1+x)$ is a convex decreasing function for $x \geq 0$, using Jensen's inequality, 
we have $L^* \geq (K-x)/(1+x)$, where 
\begin{equation*}
x = \sum_n \sum_{Q\subset[K]} |Q| \frac{b_n^Q}{NF} \leq \frac{KMF}{NF} = \frac{KM}{N}.
\end{equation*}
Thus we get $L^*\geq \frac{K(1-M/N)}{1+MK/N}F,$ which completes the proof. 
\end{proof}

\textcolor{black}{\subsection{Server-based and Server-free Coded Caching with Heterogeneous Cache Sizes at Clients}
\label{codedcachinggeneralsubsection}
So far we have discussed the coded caching scenario where there is a central server containing the entire file library and the client cache sizes are homogeneous, i.e., the same at all clients. We now describe a generalization of the result in Theorem \ref{CodedCachingLowerbound} to the case of systems in which the clients have heterogeneous cache sizes, with either a centralized server present or absent. The proof of this is easily obtained from our data exchange bound in Theorem \ref{genericlowerboundTheorem}. To the best of our knowledge, a converse for this general setting is not known in the literature. Using this converse, we are able to derive new converses and tighten existing converses for various special cases of this setting, which include widely studied coded caching settings, such as device-to-device coded caching \cite{WirelessD2DNwsCodedCaching}.} 

\textcolor{black}{Consider a coded caching system with $N$ files (each of size $F$) with $K$ client nodes denoted by a set ${\cal K}_T$. We shall indicate by the value $\gamma$ the presence ($\gamma=1$) or absence ($\gamma=0$) of a centralized server in the system containing the file library. For the purpose of utilizing our data exchange bound, we assume that all the nodes in the system are capable of transmissions; thereby, any converse for this scenario is also valid for the usual coded caching scenario in which only the server (if it is present) does transmissions in the delivery phase. The set of clients ${\cal K}_T$ is partitioned into subsets ${\cal K}_{T_i}:i=1,\hdots,t$ where the nodes in subset ${\cal K}_{T_i}$ can store a fraction $\gamma_{T_i}$ of the file library. Let $|{\cal K}_{T_i}|=K_{T_i}.$ We now give our converse for this setting.}

\begin{proposition}
\label{proposition1}
For the above heterogeneous cache sizes setting, \textcolor{black}{assuming $K\leq N$,} the communication load $L^*$ for uncoded cache placement is lower bounded as follows.
\begin{align}
\label{eqn:lowerboundgeneralcodedcaching}
L^*\geq \left(\frac{K-\sum_{i=1}^tK_{T_i}\gamma_{T_i}}{\gamma+\sum_{i=1}^tK_{T_i}\gamma_{T_i}}\right)F.
\end{align}
\end{proposition}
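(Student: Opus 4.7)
The plan is to follow the same three-step template established in the proof of Theorem \ref{CodedCachingLowerbound}, adapted to heterogeneous cache sizes with or without a central server. First, I would model the delivery phase as a data exchange problem on $K + \gamma$ nodes: the $K$ clients form $[K]$, and when $\gamma = 1$ the server is added as node $0$ holding a copy of the entire library. In the worst case, the demand vector $\mathbf{d} = (d_1,\ldots,d_K)$ is assumed to consist of $K$ distinct files, so the variables $a_P^Q$ are nonzero only when $|P| = 1$ and (if $\gamma = 1$) $0 \in Q$. Letting $c_k^Q(\mathbf{d})$ denote the number of bits demanded by client $k$ that are stored exactly at the clients in $Q$ (together with the server, when $\gamma = 1$), Theorem \ref{genericlowerboundTheorem} specializes uniformly in both cases to
\begin{align*}
L^* \geq \sum_{k=1}^K \sum_{Q \subset [K] \setminus k} \frac{c_k^Q(\mathbf{d})}{|Q|+\gamma},
\end{align*}
where in the server-free case the coefficient $1/|Q|$ is never actually evaluated at $|Q|=0$ because feasibility forces $c_k^{\emptyset}(\mathbf{d}) = 0$.

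Next, I would symmetrize by averaging this bound over the collection $\mathcal{D}$ of the $N$ cyclic demand vectors defined in (\ref{eqn231}). Letting $b_n^Q$ denote the number of bits of file $n$ stored at exactly the clients in $Q$ (together with the server, when $\gamma=1$), the quantity $b_n^Q$ appears precisely once among $\{c_k^Q(\mathbf{d}) : \mathbf{d} \in \mathcal{D}\}$, namely when $d_k = n$. As in the proof of Theorem \ref{CodedCachingLowerbound}, swapping the order of summation and fixing $n,Q$ leaves $K-|Q|$ choices of $k \in [K]\setminus Q$, producing
\begin{align*}
L^* \geq F \sum_{Q \subset [K]} \sum_{n=1}^N \frac{K - |Q|}{|Q| + \gamma} \cdot \frac{b_n^Q}{NF}.
\end{align*}

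Finally, the normalized counts $\{b_n^Q/(NF)\}$ form a probability mass function over pairs $(n,Q)$, and the aggregate client storage is bounded by $\sum_{n,Q} |Q|\, b_n^Q \leq \bigl(\sum_{i=1}^t K_{T_i} \gamma_{T_i}\bigr)\, NF$. Since $x \mapsto (K-x)/(\gamma + x)$ is convex and decreasing on the relevant range, Jensen's inequality combined with this monotonicity replaces $x$ by the upper bound $\sum_i K_{T_i} \gamma_{T_i}$ in the averaged inequality, producing precisely the expression in (\ref{eqn:lowerboundgeneralcodedcaching}). The main obstacle I anticipate is the uniform treatment of the server-free case $\gamma = 0$: one must verify that the $Q = \emptyset$ contributions vanish (so the ostensibly singular coefficient $1/|Q|$ is never evaluated at zero), and that the Jensen step in this case applies to $(K-x)/x$ on an interval where feasibility guarantees the aggregate storage $\sum_i K_{T_i}\gamma_{T_i}$ is strictly positive.
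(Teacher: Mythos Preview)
Your proposal is correct and follows essentially the same approach as the paper's proof: model the delivery as a data exchange on $K+\gamma$ nodes, apply Theorem~\ref{genericlowerboundTheorem} to obtain the unified bound with coefficient $1/(|Q|+\gamma)$, average over the $N$ cyclic demand vectors in $\mathcal{D}$, and finish with Jensen's inequality using the total client storage constraint $\sum_{n,Q}|Q|\,b_n^Q \leq \bigl(\sum_i K_{T_i}\gamma_{T_i}\bigr)NF$. Your explicit note that $c_k^{\emptyset}(\mathbf{d})=0$ when $\gamma=0$ (so the $1/|Q|$ term is never evaluated at zero) is a point the paper leaves implicit, but otherwise the arguments coincide step for step.
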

Before giving the proof of Proposition \ref{proposition1}, we give the following remarks regarding the generality of Proposition \ref{proposition1}, the new results which arise by applying Proposition \ref{proposition1} and various results from existing literature that are subsumed or improved by it. 
\begin{itemize}
    \item \textit{Heterogeneous Cache Sizes:} There exists a number of works discussing distinct or heterogenous client cache sizes, for instance \cite{UnequalCacheSizesPaper,CentralizedCodedCachingwithHeterogeneousCacheSizes}. However closed form expressions for the lower bound on the load seem to be lacking for such scenarios, to the best of our knowledge. Proposition \ref{proposition1} gives a lower bound for all such settings.  
    \item \textit{Device-to-Device Coded Caching:} Suppose there is no designated server in a coded caching setup, but the client nodes themselves are responsible for exchanging the information to satisfy their demands. This corresponds to the case of Device-to-Device (D2D) coded caching, first explored in \cite{WirelessD2DNwsCodedCaching}. In \cite{WirelessD2DNwsCodedCaching}, an achievable scheme was presented for the case when each (client) node has equal cache fraction $\frac{M}{N}$, and this scheme achieves a communication load of $(\frac{N}{M}-1)F$ bits. In the work \cite{OptimalityD2D}, it was shown that this communication load is optimal (for the regime of $K\leq N$) over all possible `one shot' schemes (where `one shot' refers to those schemes in which each demanded bit is decoded using the transmission only from one server), and further it was shown that the load is within a multiplicative factor of $2$ of the optimal communication load under the constraint for uncoded cache placement. We remark that the D2D setting of \cite{WirelessD2DNwsCodedCaching} corresponds to the special case of our current setting, with $\gamma=0,~t=1,~K_{T_1}=K,~$ and $\gamma_{T_1}=M/N$. By this correspondence, by applying Proposition \ref{proposition1}, we see that the load in this case is lower bounded as $\left(\frac{N}{M}-1\right)F,$ hence showing that the achievable scheme in \cite{WirelessD2DNwsCodedCaching} is exactly optimal under uncoded cache placement. The D2D scenario with heterogeneous cache sizes was explored in \cite{D2DHeterogeneousCacheSizes}, in which the optimal communication load was characterized as the solution of an optimization problem. However, no closed form expression of the load for such a scenario is mentioned. Clearly, our Proposition \ref{proposition1} gives such a lower bound, when we fix $\gamma=0$, for any number of levels $t$ of the client-side cache sizes.
\end{itemize}
 Further, the result for coded caching with a server and equal cache sizes at receivers, as in Theorem \ref{CodedCachingLowerbound}, is clearly obtained as a special case of Proposition \ref{proposition1} with $\gamma=1,~t=1,~K_{T_1}=K$ and $\gamma_{T_1}=\frac{M}{N}$.
 
We now proceed to prove Proposition \ref{proposition1}. The proof is similar to that of Theorem \ref{CodedCachingLowerbound}.  
\begin{proof}[Proof of Proposition \ref{proposition1}]
As in the proof of Theorem \ref{CodedCachingLowerbound}, we will denote the server node as the node $0$. From Theorem \ref{genericlowerboundTheorem}, for our setting, we have
\begin{align*}
    L^*\geq \begin{cases}
    \sum_{P\subset {\cal K}_T}\sum_{Q\subset {\cal K}_T\setminus P}\frac{|P|}{|P|+|Q|-1}a_P^Q &~~~\text{if}~\gamma=0,\\
    \sum_{P\subset {\cal K}_T}\sum_{Q'\subset {\cal K}_T\cup 0\setminus P}\frac{|P|}{|P|+|Q'|-1}a_P^{Q'} &~~~\text{if}~\gamma=1.
    \end{cases}
\end{align*}
Note that if $\gamma=1$ (i.e., the server is present), then $a_P^{Q'}=0$ whenever $0\notin Q'.$ 

For a specific demand vector ${\p d}=(d_1,\hdots,d_{K})$ and for some $Q\subset {\cal K}_T$, we define the quantity $c_k^Q(\p d)$ as follows. 
\begin{align*}
    c_k^Q(\p d)=
    \begin{cases}
   a_{\{k\}}^Q=\text{Number of bits demanded by}~k~&~~~\text{if}~\gamma=0,\\
   ~~~~~~~~~~~~\text{available exclusively in}~ Q &\\
   a_{\{k\}}^{Q\cup 0}= \text{Number of bits demanded by}~k&~~~\text{if}~\gamma=1.\\
   ~~~~~~~~~~~~~\text{available exclusively in}~Q\cup 0 &
    \end{cases}
\end{align*}

 Choosing the same special set of demand vectors ${\cal D}$ as in (\ref{eqn231}) and averaging the above lower bound over the demand vectors in $\cal D$ similar to the proof of Theorem \ref{CodedCachingLowerbound}, we obtain a bound similar to (\ref{eqn233}), 
\begin{align}
\label{eqn3001}
L^*&\geq \begin{cases}
\frac{1}{N}\sum_{{\p d}\in{\cal D}}\sum_{k\in{\cal K}_T} \sum_{Q \subset{\cal K}_T\setminus k} \frac{c_k^Q(\p d)}{|Q|}&~~~\text{if}~~\gamma=0,\\
\frac{1}{N}\sum_{{\p d}\in{\cal D}}\sum_{k\in{\cal K}_T} \sum_{Q \subset{\cal K}_T\setminus k} \frac{c_k^{Q}(\p d)}{|Q\cup 0|}&~~~\text{if}~~\gamma=1.
\end{cases}
\end{align}
Combining the two expressions in (\ref{eqn3001}), we can write a single equation which holds for $\gamma\in\{0,1\}$, 
\begin{align}
\label{eqn301}
L^*&\geq 
\frac{1}{N}\sum_{{\p d}\in{\cal D}}\sum_{k\in{\cal K}_T} \sum_{Q \subset{\cal K}_T\setminus k} \frac{c_k^Q(\p d)}{\gamma+|Q|}.
\end{align}
We now define the term $b_n^Q$ as follows.
\begin{align}
\label{eqn3002}
    b_n^Q=
    \begin{cases}
    \text{Number of bits of file}~n~\text{available exclusively in}~ Q &~~~\text{if}~\gamma=0,\\
   \text{Number of bits of file}~n~\text{available exclusively in}~Q\cup 0 &~~~\text{if}~\gamma=1.
    \end{cases}
\end{align}
Using the above definition of $b_n^Q$ and observing that each demand vector in ${\cal D}$ has distinct components, equation (\ref{eqn301}) can be written as 
\begin{align}
L^*&\geq 
\frac{1}{N}\sum_{k\in{\cal K}_T} \sum_{Q \subset{\cal K}_T\setminus k}\sum_{{\p d}\in{\cal D}} \frac{c_k^Q(\p d)}{\gamma+|Q|} \\
&= 
\frac{1}{N}\sum_{k\in{\cal K}_T} \sum_{Q \subset{\cal K}_T\setminus k}\sum_{n=1}^N \frac{b_n^Q}{\gamma+|Q|}\\
&= 
\frac{1}{N}\sum_{Q \subset{\cal K}_T}\sum_{n=1}^N \frac{(K-|Q|)b_n^Q}{\gamma+|Q|}.
\end{align}
By the definition of $b_n^Q$ in (\ref{eqn3002}), we have $\sum_{n}\sum_{Q\subset {\cal K}_T} b_n^Q=NF$. Further $\sum_{Q\subset {\cal K}_T}|Q|b_n^Q \leq \sum\limits_{i=1}^t K_{T_i}\gamma_{T_i}NF$. Also, for $\gamma\geq 0$, the function $\frac{K-x}{\gamma+x}$ is a convex decreasing function in $x$ for $x> 0.$ Thus, using Jensen's inequality, we have $L^*\geq \frac{K-x}{\gamma+x},$ where 
\begin{align*}
x=\sum_n\sum_{Q\subset {\cal K}_T}|Q|\frac{b_n^Q}{NF}\leq \frac{\sum\limits_{i=1}^t K_{T_i}\gamma_{T_i}NF}{NF}=\sum\limits_{i=1}^t K_{T_i}\gamma_{T_i}.
\end{align*}
This completes the proof. 
\end{proof}

\subsection{Coded Caching with Multiple File Requests}
\label{codedcachingmultiplefilerequests}
In \cite{MultipleFileRequestsUlukus}, coded caching with multiple file requests \textcolor{black}{was considered in which  each client requests any $\Delta$ files out of the $N$ files in the delivery phase. It was shown in \cite{MultipleFileRequestsUlukus} (Section V.A) that if the $\Delta K\leq N$, then the worst-case communication load can be lower bounded as }
\begin{align}
\label{eqnMultipleDemandsBound}    
L^*\geq \frac{K\Delta(1-M/N)}{1+MK/N}F.
\end{align}
The work \cite{MultipleFileRequestsUlukus} also gives an achievable scheme based on the scheme of \cite{MaN} which meets the above bound. The same lower bound can be derived using Theorem \ref{genericlowerboundTheorem} also, by following a similar procedure as that of the proof of Theorem \ref{CodedCachingLowerbound}. 

We give the proof in brief. The demand vector assumed in proof of Theorem \ref{genericlowerboundTheorem} becomes a $K\Delta$-length vector in this case, comprising of $K$ subvectors, each of length $\Delta$, capturing $\Delta$ demands for each client. The proof steps proceed as is until (\ref{eqn223}). The set $\cal D$ in (\ref{eqn231}) now contains the $K\Delta$-length vectors of consecutive file indices, cyclically constructed, starting from $(1,\hdots,K\Delta)$, i.e., 
\begin{align}
\label{eqn234}
{\cal D}&\triangleq \left\{\Big(j\oplus_N 0,j\oplus_N 1,\hdots,j\oplus_N (K\Delta-1)\Big):j=0,\hdots,N-1\right\}.
\end{align}
\textcolor{black}{Thus, if the demand vector considered is $\boldsymbol{d}(j)\triangleq \Big(j\oplus_N 0,j\oplus_N 1,\hdots,j\oplus_N (K\Delta-1)\Big)\in {\cal D},$ then the indices of the demanded files at client $k\in[K]$, denoted by $\boldsymbol{d}_k(j),$ is given by \begin{align*}
\boldsymbol{d}_k(j)\triangleq \{j\oplus_N (k-1)\Delta,j\oplus_N (k-1)\Delta+1,\hdots,j\oplus_N (k\Delta-1)\}.
\end{align*}}

\textcolor{black}{The averaged lower bound expression similar to (\ref{eqn233}) is then obtained as 
\begin{align}
\label{eqn333}
L^*&\geq \frac{1}{N}\sum_{j=0}^{N-1}\sum_{k=1}^{K} \sum_{Q \subset [K] \setminus k} \frac{1}{|Q|+1} c_k^Q (\p{d}(j)).
\end{align}
In this expression we have $c_k^Q({\p d}(j))$ which now indicates the number of bits of $\Delta$ distinct and consecutive files indexed by $\boldsymbol{d}_k(j)$ and available exclusively at the nodes in $Q\cup 0$ ($0$ denoting the server). }


\textcolor{black}{\underline{Observation:} $c_k^Q({\p d}(j)) = \sum\limits_{n'\in {\p d}_k(j)}b_{n'}^Q$ where $b_{n'}^Q$ denotes the number of bits of file $n'$ available exclusively in the nodes $Q\cup 0$, as in the proof of Theorem \ref{CodedCachingLowerbound}. }

\textcolor{black}{Now, $n'\in {\p d}_k(j)$ if and only if the file $n'$ is demanded by client $k$. By definition of ${\cal D}$, the event $n'\in {\p d}_k(j)$ happens for precisely $\Delta$ values of index $j$. From (\ref{eqn333}), applying the above observation, we have the following. 
\begin{align}
\nonumber
L^*&\geq \frac{1}{N}\sum_{k=1}^{K} \sum_{Q \subset [K] \setminus k}\sum_{j=0}^{N-1} \frac{1}{|Q|+1} c_k^Q (\p{d}(j))\\
\nonumber
&= \frac{1}{N}\sum_{k=1}^{K} \sum_{Q \subset [K] \setminus k} \frac{1}{|Q|+1}\left(\sum_{j=0}^{N-1}\sum\limits_{n'\in {\p d}_k(j)}b_{n'}^Q\right)\\
\nonumber
&= \frac{1}{N}\sum_{k=1}^{K} \sum_{Q \subset [K] \setminus k} \sum_{n'=1}^N \frac{\Delta}{|Q|+1} b_{n'}^Q.
\end{align}
}

Comparing above equation with (\ref{eqn224}), and following similar subsequent steps as in the proof of Theorem \ref{CodedCachingLowerbound}, we have (\ref{eqnMultipleDemandsBound}).
\begin{remark}
The work \cite{SharedCachesElia} considers a coded caching setup in which $\Lambda$ caches ($\Lambda\leq K$) are shared between the $K$ clients. The special case when $\Lambda$ divides $K$ and each cache is serving exactly $\frac{K}{\Lambda}$ clients is equivalent to the scenario of the multiple file requests in \cite{MultipleFileRequestsUlukus} with $\Lambda$ clients, each demanding $\frac{K}{\Lambda}$ files. The above proof then recovers the converse for this setting, which is obtained in \cite{SharedCachesElia} (Section III.A in \cite{SharedCachesElia}). 
\end{remark}
\subsection{Coded Caching with Decentralized Caching}
\label{decentralizedcodedcaching}
Theorem \ref{CodedCachingLowerbound} and the subsequent results discussed above hold for the \textit{centralized caching} framework, in which the caching phase is designed carefully in a predetermined fashion. In \cite{DecentralizedCodedCachingpaper}, the idea of decentralized placement was introduced, in which the caching phase is not coordinated centrally (this was called `decentralized coded caching' in \cite{DecentralizedCodedCachingpaper}). In this scenario, each client, independently of others, caches a fraction $\gamma=\frac{M}{N}$ of the bits in each of the $N$ files in the file library, chosen uniformly at random. For this scenario, the server (which has the file library) is responsible for the delivery phase. For the case of $K\leq N$, the authors in \cite{DecentralizedCodedCachingpaper} show a scheme which achieves a communication load ${K(1-M/N)}\frac{\left(1-(1-M/N)^K\right)}{MK/N}$. This was shown to be optimal in \cite{YMAExactRateMemoryTradeoff} and also in \cite{OntheOptimalityofTwoDecentralizedNujoom} via a connection to index coding. In the following we show that the same optimality follows easily via our Theorem \ref{genericlowerboundTheorem}. 

Assume that we have distinct demands at the $K$ clients, as in the proof of Theorem \ref{genericlowerboundTheorem}, given by the demand vector $\p d$. \textcolor{black}{We first note that by the law of large numbers, as $F$ increases,} for the decentralized cache placement, for any $k\in[K], Q\subset[K]\setminus k$, we have $$c_k^Q(\p d)=F\left(\frac{M}{N}\right)^{|Q|}\left(1-\frac{M}{N}\right)^{K-|Q|},$$ where $c_k^Q(\p d)$ is as defined in (\ref{eqn222}). Using this in (\ref{eqn223}), we get
\begingroup
\allowdisplaybreaks
\begin{align}
\nonumber
L^*&\geq \sum_{k=1}^{K} \sum_{Q \subset [K] \setminus k} \frac{1}{|Q|+1} c_k^Q (\p{d})\\
\nonumber
&\geq \sum_{k=1}^{K} \sum_{Q \subset [K] \setminus k}\frac{F}{|Q|+1}\left(\frac{M}{N}\right)^{|Q|}\left(1-\frac{M}{N}\right)^{K-|Q|}\\
\nonumber
&=F\sum_{Q \subset [K]}\left(\frac{K-|Q|}{|Q|+1}\right)\left(\frac{M}{N}\right)^{|Q|}\left(1-\frac{M}{N}\right)^{K-|Q|}\\
\nonumber
&=F\sum_{i=0}^K\binom{K}{i}\left(\frac{K-i}{i+1}\right)\left(\frac{M}{N}\right)^{i}\left(1-\frac{M}{N}\right)^{K-i}\\
\nonumber
&=F\sum_{i=0}^K\binom{K}{i+1}\left(\frac{M}{N}\right)^{i}\left(1-\frac{M}{N}\right)^{K-i}\\
\nonumber
&=F\sum_{j=1}^K\binom{K}{j}\left(\frac{M}{N}\right)^{j-1}\left(1-\frac{M}{N}\right)^{K-j+1}\\
\nonumber
&=F\frac{N}{M}\left(1-\frac{M}{N}\right)\sum_{j=1}^K\binom{K}{j}\left(\frac{M}{N}\right)^{j}\left(1-\frac{M}{N}\right)^{K-j}\\
\nonumber
&=\frac{NF}{M}\left(1-\frac{M}{N}\right)\left(1-\left(1-\frac{M}{N}\right)^K\right),
\end{align}
\endgroup
where the last step follows as $\sum_{j=0}^K\binom{K}{j}\left(\frac{M}{N}\right)^{j}\left(1-\frac{M}{N}\right)^{K-j}=\left(\frac{M}{N}+\left(1-\frac{M}{N}\right)\right)^K=1.$ Thus, we have given an alternate proof of the optimality of the decentralized scheme in \cite{DecentralizedCodedCachingpaper}.
\section{Decentralized Coded Data Shuffling}
\label{decentralizedcodeddatashuffling}
In distributed machine learning systems consisting of a master and multiple worker nodes, data is distributed to the workers by the master in order to perform training of the machine learning model in a distributed manner. In general, this training process takes multiple iterations, with the workers doing some processing (like computing gradients) on their respective training data subsets. In order to ensure that the training data subset at each node is sufficiently representative of the data, and to improve the statistical performance of machine learning algorithms, shuffling of the training data between the worker nodes is implemented  after every training iteration. This is known as \textit{data shuffling}.

A coding theoretic approach to data-shuffling, which involves the master communicating coded data to the workers was presented in \cite{speedingup}. A decentralized approach to this problem was presented in \cite{FundLimitsofDecentralizedDataShuffling}, in which the workers exchange the training data amongst themselves, without involving the master node, to create a new desired partition in the next iteration. In this section, we look at this work \cite{FundLimitsofDecentralizedDataShuffling} and give a new simpler proof the lower bound on the communication load for the data shuffling. 

We first review the setting in \cite{FundLimitsofDecentralizedDataShuffling}. Consider $K$ workers in the system. The total dataset consists of $N=Kq$ data units, with $B$ bits per unit. Each node has a local cache of size $MB$ bits (such that $q\leq M\leq Kq$), out of which $q$ units are the current `active' data at any time step. The storage at node $k$ at the $(t-1)^{th}$ time instant is denoted as $Z_{k,t-1}$. We denote the active data at time $t-1$ in node $k$ as $A_{k,t-1}$. Clearly, $A_{k,t-1}\subset Z_{k,t-1}$ and $|A_{k,t-1}|=qB.$ The bits in $A_{k,t-1}:k\in[K]$ partition the entire data of $KqB$ bits.   

At time $t$, a new partition $\{A_{k,t}:k\in[K]\}$ is to be made active at the nodes $k$. The decentralized data shuffling problem is to find a delivery scheme (between workers) to shuffle the active data set $A_{k,t-1}:\forall k$ to a new partition $A_{k,t}:\forall k$, where this partition is known only at the time step $t$. Each worker computes a function of $Z_{k,t-1}$ and broadcasts it to the other workers. Using these transmissions and the locally available cache $Z_{k,t-1},$ each node is required to decode $A_{k,t}.$ As in the case of coded caching, one seeks to reduce the worst-case communication load by designing the initial storage and coded transmissions carefully. 

For the above setting, the following bound on the communication load $L^*$ was shown in \cite{FundLimitsofDecentralizedDataShuffling}. 
\begin{align}
\label{eqn401}
L^*\geq \frac{ Kq}{K-1}.\frac{K-M/q}{M/q}B.
\end{align}
The above bound was shown to be optimal for some special cases of the parameters, and order-optimal otherwise. 

We now recover this bound (\ref{eqn401}) by a simple proof using our generic lower bound Theorem \ref{genericlowerboundTheorem}. 

For $k\in[K]$ and $Q\subset[K],$ let $A_{k,t}^{Q}$ denote the subset of bits of $A_{k,t}$ available exactly at the nodes in $Q$ and not anywhere else. Note that $|A_{k,t}^{Q}|=0$ if $Q=\phi$, as each bit is necessarily present in at least one of the $K$ nodes.  

As per our bound in Theorem \ref{genericlowerboundTheorem}, we have
\begin{align*}
L^*\geq \sum_{k\in[K]}\sum_{Q\subset [K]\setminus k}\frac{|A_{k,t}^{Q}|}{|Q|}.
\end{align*}

Let the set of circular permutations of $(1,2,\hdots,K)$, apart from the identity permutation, be denoted by $\Gamma$. There are $K-1$ of them clearly. We denote an arbitrary permutation in $\Gamma$ by $\gamma$, and by $\gamma_k$ we denote the $k^{th}$ coordinate of $\gamma$. 

Now, consider the shuffle given by $\gamma\in \Gamma$, i.e., for each $k$, $A_{k,t}=A_{\gamma_k,t-1}$. For this shuffle, we have by the above equation
\begin{align}
L^*&\geq \sum_{k\in[K]}\sum_{Q\subset  [K]\setminus k}\frac{|A_{\gamma_k,t-1}^Q|}{|Q|}\\
\label{eqn991}
&=\sum_{Q\subset [K]}\sum_{k\in[K]\setminus Q}\frac{|A_{\gamma_k,t-1}^Q|}{|Q|}.
\end{align}
Now, averaging (\ref{eqn991}) over all permutations in $\Gamma$, we get
\begin{align}
    L^*&\geq \frac{1}{K-1}\sum_{\gamma\in\Gamma}\sum_{Q\subset [K]}\sum_{k\in[K]\setminus Q}\frac{|A_{\gamma_k,t-1}^Q|}{|Q|},\\
    &=
    \frac{1}{K-1}\sum_{Q\subset [K]}\sum_{k\in[K]\setminus Q}\sum_{\gamma\in\Gamma}\frac{|A_{\gamma_k,t-1}^Q|}{|Q|}.
\end{align}
As we go through all choices of $\gamma \in \Gamma,$ we see that $\gamma_k$ takes every value except $k$, i.e., $\gamma_k$ assumes each value in $[K]\setminus k$ exactly once. Also, $A_{k',t-1}^Q$ is the collection of bits of $A_{k',t-1}$ present only in $Q$. However, the bits $A_{k',t-1}$ are already presented in $k'$. Hence $|A_{k',t-1}^Q|=0$ if $k'\notin Q$. Hence, we have
\begin{align}
L^*&\geq \frac{1}{K-1}\sum_{Q\subset [K]}\sum_{k\in[K]\setminus Q}\sum_{k'\in Q}\frac{|A_{k',t-1}^Q|}{|Q|}, \\
   &= \frac{1}{K-1}\sum_{Q\subset [K]}\sum_{k'\in Q}\frac{|A_{k',t-1}^Q|(K-|Q|)}{|Q|}\\
   &=\frac{1}{K-1}\sum_{k'\in[K]}\sum_{Q\subset[K]: k'\in Q}\frac{|A_{k',t-1}^Q|(K-|Q|)}{|Q|}. 
\end{align}
Now, we have the following observations as $A_{k',t-1}^Q:k'\in[K], \{Q'\subset [K]: k'\in Q\}$ form a partition of all the $NB$ bits. 
\begin{align*}
\sum_{Q\subset [K]}\sum_{k'\in Q}|A_{k',t-1}^Q|&=NB=KqB\\
\sum_{Q\subset [K]}\sum_{k'\in Q}|A_{k',t-1}^Q||Q|&\leq KMB.
\end{align*}
Utilizing the above, and the fact that $\frac{K-|Q|}{|Q|}$ is a convex decreasing function in $|Q|$ (for $|Q|\geq 0$), we have 
\begin{align}
L^*&\geq \frac{KqB}{K-1}.\frac{(K-\sum_{Q\subset [K]}\sum_{k'\in Q}\frac{|A_{k',t-1}^Q|}{NB}|Q|)}{\sum_{Q\subset [K]}\sum_{k'\in Q}\frac{|A_{k',t-1}^Q|}{NB}|Q|}\\
&\geq \frac{KqB}{K-1}.\frac{K-KM/N}{KM/N}\\
&= \frac{KqB}{K-1}.\frac{K-M/q}{M/q}
\end{align}
Thus, we have recovered (\ref{eqn401}). 
\begin{remark}
We have considered the decentralized version of the coded data shuffling problem in this subsection. The centralized version of the data shuffling problem was introduced in \cite{speedingup} and its information theoretic limits were studied elaborately in \cite{ElmahdyMohajerCDS_TIT}. Our data exchange bound, when applied to the setting in \cite{ElmahdyMohajerCDS_TIT}, results in a looser converse result than that in \cite{ElmahdyMohajerCDS_TIT}. The reasons for this is explored in Section \ref{indexcodingrelationship} using the connection between our data exchange bound and the bound for index coding known in literature. 
\end{remark}

\section{Coded Distributed Computing}
\label{codeddistributedcomputing}
In a distributed computing setting, there are $N$ files on which the distributed computing task has to be performed by $K$ nodes. The job at hand is divided into three phases: Map, Shuffle and Reduce.\textcolor{black}{ In the shuffle phase, the nodes assigned to perform the distributed computing task exchange data. In \cite{FundLimitsDistribCom}, the authors proposed coded communication during the shuffle phase to reduce the communication load. We recollect the setting and the main converse result from \cite{FundLimitsDistribCom}, which we recover using our data exchange bound. }

\textcolor{black}{A subset $\mathcal{M}_i$ of $N$ files is assigned to $i^{\text{th}}$ node and the $i^{\text{th}}$ node computes the map functions on this subset in the map phase. We assume that the total number of map functions computed at the $K$ nodes is $rN$, where $r$ is referred to as the \textit{computation load}. In the reduce phase, a total of $W$ reduce functions are to be computed across the $K$ nodes corresponding to the $N$ files. Each node reduces the same number of functions. In this work, as in \cite{FundLimitsDistribCom}, we consider two scenarios: in the first one, each reduce function is computed exactly at one node and in the second, each reduce function is computed at $s$ nodes, where $s\geq 2$.}

\textcolor{black}{Each map function output (also referred to as intermediate output) corresponds to a particular file and a particular reduce function. For each file and each reduce function, an intermediate output of $T$ bits is obtained. To compute an assigned reduce function, each node requires the intermediate outputs of \textit{all} the files corresponding to the assigned reduce function. This means each node is missing the intermediate outputs (corresponding to the assigned reduce functions) of those files that are not assigned to it in the map phase. }

 \textcolor{black}{The intermediate outputs of each file assigned to node $i$ corresponding to all the reduce functions is available at node $i$ at the end of the map phase. These intermediate outputs at the end of the map phase are shuffled in the shuffle phase in order to deliver the missing intermediate outputs at the nodes. Let $L^*(r)$ be the total number of bits broadcasted by the $N$ nodes in the shuffle phase, minimized over all possible shuffling schemes and map function assignments with a computation load $r$. We refer to $L^*(r)$ as the \textit{minimum communication load}.}
 
 To obtain similar expressions for the communication load as in \cite{FundLimitsDistribCom}, we normalize the communication load by the total number of intermediate output bits ($=WNT$). \textcolor{black}{We consider the first scenario now, where each reduce function is computed exactly at one node.}

\begin{theorem}\cite{FundLimitsDistribCom}
The minimum communication load $L^*(r)$ incurred by a distributed computing system of $K$ nodes for a given computation load $r$, where every reduce function is computed at exactly one node and each node computes $\frac{W}{K}$ reduce functions is bounded as
\begin{equation}
\frac{L^*(r)}{WNT} \geq \frac{1}{r} \left ( 1 - \frac{r}{K} \right ).
\end{equation}
\end{theorem}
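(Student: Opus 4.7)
The plan is to model the shuffle phase as a data exchange problem on the $K$ worker nodes and apply Theorem \ref{genericlowerboundTheorem}. I would first index the information bits by pairs $(n,w)$ consisting of a file index $n\in[N]$ and a reduce function index $w\in[W]$, each pair contributing $T$ bits. After the map phase, the intermediate output bits for $(n,w)$ sit exactly at the nodes whose assignment contains $n$, namely the set $\{i:n\in\mathcal{M}_i\}$. Since in this scenario each reduce function is computed by exactly one node, any intermediate output is demanded by at most one node; hence $a_P^Q=0$ whenever $|P|\neq 1$, which collapses the generic bound to
\begin{equation*}
L^*(r)\ \geq\ \sum_{k\in[K]}\sum_{Q\subset[K]\setminus k}\frac{1}{|Q|}\,a_{\{k\}}^{Q},
\end{equation*}
where I used $|P|+|Q|-1=|Q|$ when $|P|=1$.

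Next, I would parametrize the map assignment by $f_Q$, the number of files $n$ with $\{i:n\in\mathcal{M}_i\}=Q$. For each $k\notin Q$, the $W/K$ reduce functions assigned to node $k$ together with the $f_Q$ files stored exclusively at $Q$ contribute $T$ bits each that are exclusively stored at $Q$ and demanded only by $k$, yielding $a_{\{k\}}^{Q}=(WT/K)f_Q$. Substituting and summing first over $k\notin Q$ gives
\begin{equation*}
L^*(r)\ \geq\ \frac{WT}{K}\sum_{Q\subset[K]}\frac{(K-|Q|)\,f_Q}{|Q|}.
\end{equation*}

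Finally, the computation load constraint translates to the two scalar identities $\sum_Q f_Q=N$ and $\sum_Q |Q|\,f_Q=rN$, so that $\{f_Q/N\}$ is a probability mass function on nonempty subsets $Q\subset[K]$ with mean subset size exactly $r$. Since $x\mapsto (K-x)/x=K/x-1$ is convex and decreasing on $x>0$, Jensen's inequality gives
\begin{equation*}
\sum_{Q\subset[K]}\frac{K-|Q|}{|Q|}\cdot\frac{f_Q}{N}\ \geq\ \frac{K-r}{r},
\end{equation*}
and normalizing by $WNT$ delivers $\frac{L^*(r)}{WNT}\geq \frac{1}{r}\bigl(1-\frac{r}{K}\bigr)$.

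I expect the main obstacle to be purely bookkeeping rather than any deep step: carefully justifying that the only nonzero $a_P^Q$ arise from $|P|=1$ (which genuinely uses the one-node-per-reduce-function hypothesis), and verifying that the $Q$-stratification of the files is consistent with the computation load so that the two moment constraints on $\{f_Q\}$ hold as stated and Jensen's inequality applies without integrality or support issues.
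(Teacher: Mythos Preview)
Your proposal is correct and follows essentially the same route as the paper: cast the shuffle phase as a data exchange problem, use the one-node-per-reduce-function assumption to reduce to $|P|=1$, identify $a_{\{k\}}^Q=(WT/K)\cdot(\text{number of files stored exactly at }Q)$, and finish with Jensen on $x\mapsto (K-x)/x$ under the constraints $\sum_Q f_Q=N$ and $\sum_Q|Q|f_Q=rN$. The only cosmetic difference is that the paper aggregates the file counts by $|Q|=j$ into $\tilde a_{\mathcal M}^j$ before applying Jensen, whereas you keep the finer subset-indexed distribution $\{f_Q/N\}$; both yield the same bound.
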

\begin{proof}
Let $\mathcal{M} = (\mathcal{M}_1, \ldots,  \mathcal{M}_K)$ denote a given map function assignment to the nodes, where $\mathcal{M}_i \subset [N]$. Let $L_{\mathcal{M}}$ denote the communication load associated with the map function assignment $\mathcal{M}$.  We will prove that
\begin{equation*}
\frac{L_{\mathcal{M}}}{WNT} \geq \sum_{j=1}^K \frac{\tilde{a}_{\mathcal{M}}^j}{N} \frac{K-j}{Kj},
\end{equation*}
where $\tilde{a}_{\mathcal{M}}^j$ denotes the number of files which are mapped at exactly $j$ nodes in $[K]$. It is easy to see that $\sum_{j=1}^K \tilde{a}_{\mathcal{M}}^j = N$ and $\sum_{j=1}^K j \tilde{a}_{\mathcal{M}}^j = rN$. 
We will apply Theorem \ref{genericlowerboundTheorem} to this setting.  \textcolor{black}{Recall that each reduce function is computed exactly at one node in our present setup. To apply Theorem \ref{genericlowerboundTheorem}, we need to ascertain the quantities $a_P^Q$ for $P,Q$ being disjoint subsets of $[K]$. To do this, we first denote by $\tilde{a}^Q$,  the number of files whose intermediate outputs are demanded by some node $k$ and available exclusively in the nodes of $Q$. Note that $\tilde{a}^Q$ is the same for any $k\in[K]\setminus Q$, since each node demands intermediate outputs of \textit{all} the files that are not mapped at the node itself.  }

\textcolor{black}{Since the number of reduce functions assigned to node $k$ is $\frac{W}{K}$ (as each reduce function is computed at exactly one node) and each intermediate output is $T$ bits, the number of intermediate output bits which are demanded by any node $k$ and available exclusively in the nodes of $Q$ are $\frac{WT}{K}\tilde{a}^Q$. Thus, for any $Q\subset [K]$, the quantities $a_P^Q$ in Theorem \ref{genericlowerboundTheorem} are given as follows. 
$$a_P^Q=\begin{cases}
 \frac{WT}{K}\tilde{a}^Q&\text{if}~P=\{k\}~\text{for some}~k\in[K]~\text{such that}~k\notin Q \\
0&\text{otherwise}
\end{cases}$$
Further note that $\sum_{Q\subset[K]:|Q|=j}\tilde{a}^Q=\tilde{a}_{\cal M}^j$ by definition of $\tilde{a}_{\cal M}^j.$ Using these and applying Theorem \ref{genericlowerboundTheorem} with the normalization factor $WNT$, we have the following inequalities.}
\begin{align*}
\frac{L_{\mathcal{M}}}{WNT} & \geq \frac{1}{WNT} \sum_{k=1}^K \sum_{Q \subset  [K] \setminus \{k\}} \frac{1}{|Q|} \tilde{a}^Q \frac{WT}{K} \\
& =  \frac{1}{KN} \sum_{j =1}^K \sum_{Q \subset  [K]: |Q| = j} \sum_{k \in [K] \setminus Q}  \frac{1}{j} \tilde{a}^Q \\
&\textcolor{black}{=  \frac{1}{KN} \sum_{j =1}^K \sum_{Q \subset  [K]: |Q| = j}  \frac{K-j}{j} \tilde{a}^Q}\\
& \textcolor{black}{=  \frac{1}{KN} \sum_{j =1}^K  \frac{K-j}{j} \left ( \sum_{Q \subset  [K]: |Q| = j} \tilde{a}^Q \right ) }\\
& \textcolor{black}{= \frac{1}{KN} \sum_{j =1}^K  \frac{K-j}{j} \tilde{a}_{\mathcal{M}}^j.}
\end{align*}

Using definition of $L^*(r)$,
noting that $\frac{K-j}{j}$ is a convex decreasing function of $j$ and that $\frac{\sum_{j=1}^K{\tilde{a}_{\mathcal{M}}^j}}{N} = 1$, we have that
\begin{align*}
\frac{L^*(r)}{WNT}& \geq \frac{1}{K}   \frac{K-\sum_{j=1}^{K} \frac{j\tilde{a}_{\mathcal{M}}^j}{N}}{\sum_{j=1}^{K} \frac{j\tilde{a}_{\mathcal{M}}^j}{N}} \\
 &= \frac{1}{K} \frac{K - r}{r} = \frac{1}{r} \left ( 1 - \frac{r}{K} \right ).
\end{align*}
\end{proof}

Now, we consider the case in which each reduce function has to be computed at $s$ nodes. The total number of reduce functions is assumed to be $W$. In addition, the following assumption is made to keep the problem formulation symmetric with respect to reduce functions: every possible $s$ sized subset of $K$ nodes is assigned $\frac{W}{{K \choose s}}$ reduce functions (we assume ${K\choose s}$ divides $W$). As in the previous case, we will denote the communication load for a given map function assignment by $L_{\mathcal{M}}(s)$ and the minimum communication load under all possible map function assignments with computation load $r$ by $L^*(r,s)$. We will prove the following result which gives a lower bound on $L_{\mathcal{M}}(s)$.
\begin{proposition} \cite{FundLimitsDistribCom}
The communication load corresponding to a map function assignment $\mathcal{M}$ when each reduce function has to be computed at $s$ nodes is lower bounded as
\begin{equation}
\frac{L_{\mathcal{M}}(s)}{WNT} \geq \sum_{j=1}^K \frac{\tilde{a}_{\mathcal{M}}^j}{N} \sum_{l = \max(0, s-j)}^{\min(K-j,s)} \frac{ {K-j \choose l} {j \choose s-l}}{{K \choose s}} \frac{l}{l+j-1}.
\end{equation}
\end{proposition}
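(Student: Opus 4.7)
My plan is to apply Theorem \ref{genericlowerboundTheorem} directly to the $s$-fold reduce replication setting. Unlike the coded caching analyses earlier in the paper, no averaging over demand vectors is needed here: the reduce-function assignment is already symmetric (every $s$-subset is assigned the same number of reduce functions), so the required lower bound can be computed deterministically from the map assignment $\mathcal{M}$.

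First I would determine the storage/demand structure of the individual intermediate output bits. Consider a file whose intermediate outputs are available exactly in a subset $Q\subset[K]$ of size $j$, and a reduce function assigned to an $s$-subset $S\subset[K]$. The $T$ bits of this intermediate output are stored exactly at $Q$ and demanded exactly by the nodes in $P\triangleq S\setminus Q$: nodes outside $S$ do not compute this reduce function, and nodes in $S\cap Q$ already possess the file. Writing $\tilde{a}^Q$ for the number of files mapped exactly at $Q$ (so that $\sum_{Q:|Q|=j}\tilde{a}^Q=\tilde{a}_{\mathcal{M}}^j$) and using that every $s$-subset is assigned $W/\binom{K}{s}$ reduce functions, the number of $s$-subsets $S$ satisfying $S\setminus Q=P$ is $\binom{j}{s-l}$, where $l=|P|$. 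This gives
$$a_P^Q \;=\; \tilde{a}^Q\,\binom{j}{s-l}\,\frac{WT}{\binom{K}{s}}.$$

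Next I would substitute this into Theorem \ref{genericlowerboundTheorem} and collect terms. Grouping the outer double sum first by $j=|Q|$ and then by $l=|P|$, and noting that for each fixed $Q$ of size $j$ there are $\binom{K-j}{l}$ choices of $P\subset[K]\setminus Q$ with $|P|=l$, the summation over $Q$ converts $\tilde{a}^Q$ into $\tilde{a}_{\mathcal{M}}^j$. Normalizing by $WNT$ then yields the claimed expression. The summation range $l\in[\max(0,s-j),\min(K-j,s)]$ is forced by the constraints $|S\cap Q|=s-l\in[0,j]$ and $l\in[0,K-j]$; the $l=0$ term (possible when $s\leq j$, corresponding to $S\subseteq Q$, i.e.\ no one demands the bits) contributes nothing because of the factor $l/(l+j-1)$, so it is harmless to include in the stated range.

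I do not foresee a substantial conceptual obstacle; the chief work is combinatorial bookkeeping. One must ensure that every (file, reduce function) pair is counted in exactly one $a_P^Q$, and that the re-indexing from a sum over $s$-subsets $S$ to a sum over pairs $(Q,P)$ with $P=S\setminus Q$ picks up the correct multiplicity $\binom{j}{s-l}$. Once this accounting is in place, the rest is straightforward algebraic rearrangement.
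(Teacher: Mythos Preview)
Your proposal is correct and follows essentially the same approach as the paper: both apply Theorem~\ref{genericlowerboundTheorem} directly, compute $a_P^Q=\tilde a^Q\binom{j}{s-l}\frac{WT}{\binom{K}{s}}$ via the same counting of $s$-subsets $S$ with $S\setminus Q=P$, and then group by $j=|Q|$ and $l=|P|$ (picking up the factor $\binom{K-j}{l}$ and converting $\tilde a^Q$ into $\tilde a_{\mathcal M}^j$). Your explicit remark that the $l=0$ term vanishes is a nice clarification not spelled out in the paper, but otherwise the arguments coincide.
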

\begin{proof}
As before, we will denote by $\tilde{a}^Q$, the number of files whose map function outputs are available exclusively in the nodes of $Q$. Also, we will denote the number of intermediate output bits which are demanded exclusively by the nodes in $P$ and available exclusively in the nodes of $Q$ by $b_P^Q$. Then applying Theorem \ref{genericlowerboundTheorem}, the lower bound on the communication load  in terms of $\{b_P^Q\}$ is given by
\begin{eqnarray*}
\frac{L_{\mathcal{M}}(s)}{WNT} & \geq & \frac{1}{WNT} \sum_{P\subset [K]}\sum_{Q\subset[K]\setminus P}\frac{|P|}{|P|+|Q|-1} b_{P}^{Q}.
\end{eqnarray*}
We first interchange the above summation order and consider all sets $Q$ with $|Q| = j$ and all sets $P$ such that $|P| = l$. For $|Q|=j$, we need to count the subsets of size $s$, which form a subset of $P \cup Q$. Thus, for a fixed $j$, we can see that the range of $l$ can vary from $\max(0,s-j)$ to $\min(K-j,s)$. For a given subset $P$ of size $l$, the number of $s$ sized subsets which are contained within $P \cup Q$ and contain $P$ are ${j \choose s-l}$. Hence, the number of intermediate output bits demanded exclusively by the nodes in $P$ and available exclusively in $Q$, $b_P^Q$, is given by $b_{P}^{Q} = \tilde{a}^{Q} WT  \frac{{j \choose s-l}}{{K \choose s}}$. This is because each of the $s$-sized subset has to reduce $\tilde{a}^{Q} \frac{W}{{K \choose s}}$ functions. Using this relation, the above inequality can be rewritten as follows.
\begin{eqnarray}
\frac{L_{\mathcal{M}}(s)}{WNT}  & \geq & \sum_{j=1}^K \frac{1}{N} \left (\sum_{Q\subset[K]: |Q|=j} \tilde{a}^{Q} \right ) \sum_{l=\max(0,s-j)}^{\min(K-j,s)} \frac{l}{l+j-1} \left (\sum_{P \subset[K]\setminus Q: |P|=l} \frac{{j \choose s-l}}{{K \choose s}} \right ) \\
\label{eqn1001}
& = & \sum_{j=1}^K \frac{\tilde{a}_{\mathcal{M}}^j}{N} \sum_{l=\max(0,s-j)}^{\min(K-j,s)}  \frac{l}{l+j-1} \frac{{K-j \choose l}{j \choose s-l}}{{K \choose s}},
\end{eqnarray}
 where (\ref{eqn1001}) follows as $\tilde{a}_{\mathcal{M}}^j=\sum_{Q\subset[K]: |Q|=j} \tilde{a}^{Q}$. This completes the proof.
\end{proof}

The above lemma along with certain convexity arguments resulting from the constraints imposed by the computation load can be used to prove the lower bound on $L^*(r,s)$. Interested reader is referred to the converse proof of Theorem 2 in \cite{FundLimitsDistribCom} for the same.

\section{Relation to Index Coding Lower Bound}
\label{indexcodingrelationship}
We now consider the `centralized' version of the data exchange problem, where one of the nodes has a copy of all the information bits and is the lone transmitter in the system. 
We will use the index $0$ for this server node, and assume that there are $K$ other nodes in the system, with index set $[K]$, acting as clients. 
In terms of Definition~\ref{dataexchangeproblemdefinition}, this system is composed of $K+1$ nodes $\{0\} \cup [K]$, the demand $D_0$ of the server is empty, while the demands $D_i$ and the contents $C_i$ of all the clients are subsets of the contents of the server, i.e., $C_i,D_i \subset C_0$ for all $i \in [K]$.
Without loss of generality, \textcolor{black}{we assume that only the server performs all the transmissions since any coded bit that can be generated by any of the client nodes can be generated at the server itself.} Clearly, this is an index coding problem~\cite{IndexCodingTIT} with $K$ clients or receivers, the demand of the $i^{\text{th}}$ receiver is $D_i$ and its \emph{side information} is $C_i$.
When applied to this scenario, our main result Theorem~\ref{genericlowerboundTheorem} therefore provides a lower bound on the index coding communication cost.

The maximum acyclic induced subgraph (MAIS) and its generalization, which is known as the generalized independence number or the $\alpha$-bound, are well known lower bounds in index coding~\cite{IndexCodingTIT,ErrorCorrectionIndexCoding}.
In this section, we describe the relation between the $\alpha$-bound of index coding and the centralized version of Theorem~\ref{genericlowerboundTheorem}. We show that the latter is in general weaker, and identify the scenarios when these two bounds are identical.
We then use these observations to explain why Theorem~\ref{genericlowerboundTheorem} can not provide a tight lower bound for the centralized data shuffling problem~\cite{ElmahdyMohajerCDS_TIT}. 

Let us first apply Theorem~\ref{genericlowerboundTheorem} to the centralized data exchange problem. Since node $0$ contains all the information bits and its demand is empty, we have $a_{P}^{Q'} = 0$ if $0 \notin Q'$ or $0 \in P$. Using $Q = Q' \setminus \{0\}$ and defining the variable $c_P^Q = a_P^{Q \cup \{0\}} = a_P^{Q'}$, we obtain 
\begin{theorem} \label{genericCentralized}
The centralized version of our main result Theorem~\ref{genericlowerboundTheorem} is
\begin{align}
L^* &\geq \sum_{P \subset [K]} \sum_{\substack{Q' \subset \{0\} \cup [K] \\ 0 \in Q', P \cap Q' = \phi}} \frac{|P|}{|P| + |Q'| - 1} a_{P}^{Q'} \nonumber \\
&= \sum_{P \subset [K]} \sum_{Q \subset [K] \setminus P} \frac{|P|}{|P| + |Q|} c_P^Q. \nonumber 
\end{align} 
\end{theorem}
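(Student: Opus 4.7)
The plan is to apply Theorem \ref{genericlowerboundTheorem} directly to the $(K+1)$-node system $\{0\} \cup [K]$ and then simplify, using the two defining features of the centralized scenario: (i) the server node $0$ has every information bit, and (ii) the server demands nothing.

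First, I would argue that in the sum provided by Theorem \ref{genericlowerboundTheorem}, the only surviving terms are those with $0 \in Q'$ and $0 \notin P$. The condition $0 \notin P$ follows because $D_0 = \emptyset$, so no bit is demanded by node $0$; hence $(\cap_{i \in P} D_i)$ is empty whenever $0 \in P$, forcing $a_P^{Q'} = 0$. For $0 \in Q'$, note that every bit in the system lies in $C_0$, so a bit that is ``stored in no node outside $Q'$'' must in particular not lie in $C_0$ unless $0 \in Q'$; hence $a_P^{Q'} = 0$ whenever $0 \notin Q'$.

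Next, I would change the summation variable to $Q = Q' \setminus \{0\}$, so that $Q \subset [K] \setminus P$ ranges freely and $|Q'| = |Q| + 1$. Substituting $c_P^Q \triangleq a_P^{Q \cup \{0\}}$ yields
\begin{align*}
L^* \;\geq\; \sum_{P \subset [K]} \sum_{Q \subset [K] \setminus P} \frac{|P|}{|P| + (|Q|+1) - 1}\, c_P^Q \;=\; \sum_{P \subset [K]} \sum_{Q \subset [K] \setminus P} \frac{|P|}{|P| + |Q|}\, c_P^Q,
\end{align*}
which is the desired bound. The step of restricting transmissions to the server (justified because any codeword computable at a client is computable at the server, since $C_i \subset C_0$) should be stated up front so that $L^*$ in Theorem \ref{genericlowerboundTheorem} matches the index coding broadcast cost.

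Since this is a pure specialization, there is no real obstacle; the only subtle point is making sure the two degenerate conditions on $a_P^{Q'}$ are justified cleanly and that the change of variable $Q' \mapsto Q = Q' \setminus \{0\}$ reduces the denominator by exactly one. Everything else is bookkeeping.
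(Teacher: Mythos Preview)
Your proposal is correct and follows essentially the same route as the paper: apply Theorem~\ref{genericlowerboundTheorem} to the $(K+1)$-node system, use $D_0=\emptyset$ and $C_0 \supset C_i$ to kill the terms with $0\in P$ or $0\notin Q'$, then reindex via $Q=Q'\setminus\{0\}$ and substitute $c_P^Q=a_P^{Q\cup\{0\}}$. The paper states these steps in the paragraph immediately preceding the theorem; your added remark about server-only transmissions matches the paper's justification as well.
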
 
Note that it is possible to have $c_P^Q = a_P^{Q \cup \{0\}}> 0$ when $Q = \phi$.

In Section~\ref{sub:sec:index_coding:alpha} we express the generalized independence number $\alpha$ in terms of the parameters $c_P^Q$, and in Section~\ref{sub:sec:index_coding:relation} we identify the relation between our lower bound Theorem~\ref{genericCentralized} and the index coding lower bound $\alpha$.

\subsection{The generalized independence number bound} \label{sub:sec:index_coding:alpha}

Let $\gamma = (\gamma_1,\dots,\gamma_K)$ be any permutation of $[K]$, where $\gamma_i$ is the $i^{\text{th}}$ coordinate of the permutation.
Applying similar ideas as in the proof of Theorem~\ref{genericlowerboundTheorem} to the centralized scenario we obtain the following lower bound on $L^*$.
This lower bound considers the nodes in the order $\gamma_1,\dots,\gamma_K$, and for each node in this sequence it counts the number of bits that are demanded by this node which are neither demanded by and nor available as side information in any of the earlier nodes.
\begin{proposition} \label{prop:centralized:1}
For any permutation $\gamma$ of $[K]$, 
\begin{equation} \label{eq:prop:centralized:1}
L^* \geq \sum_{i=1}^{K} \, \sum_{\substack{P \subset \{\gamma_i,\dots,\gamma_K\} \\ \gamma_i \in P}} \, \sum_{Q \subset \{\gamma_{i+1},\dots,\gamma_K\}} c_P^Q.
\end{equation} 
\end{proposition}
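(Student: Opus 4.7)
The plan is to prove Proposition~\ref{prop:centralized:1} by an information-theoretic chain-rule bound on the server's transmission length. Assume, as is standard in these converses, that the source bits are i.i.d.\ uniform so that $L^* \geq H(W)$ for the server's coded message $W$, and so that the joint entropy of any collection of source bits equals the cardinality of its union. I would then fix the permutation $\gamma$ and process the clients in the order $\gamma_1,\gamma_2,\ldots,\gamma_K$, using repeatedly (i) the decodability identity $H(D_{\gamma_i}\mid W, C_{\gamma_i})=0$, and (ii) the monotonicity of entropy under conditioning.

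Concretely, I would derive
\begin{align*}
H(W) \;&\geq\; H(W\mid C_{\gamma_1}) \\
\;&=\; H(D_{\gamma_1}\mid C_{\gamma_1}) + H(W\mid C_{\gamma_1}, D_{\gamma_1}) \\
\;&\geq\; H(D_{\gamma_1}\mid C_{\gamma_1}) + H(W\mid C_{\gamma_1}, C_{\gamma_2}, D_{\gamma_1}) \\
\;&\;\;\vdots \\
\;&\geq\; \sum_{i=1}^{K} H\!\left(D_{\gamma_i}\,\middle|\, C_{\gamma_1},\ldots,C_{\gamma_i},\, D_{\gamma_1},\ldots,D_{\gamma_{i-1}}\right),
\end{align*}
by the following iteration: given the current conditioning, use decodability to rewrite $H(W\mid\cdots) = H(W, D_{\gamma_i}\mid\cdots)$, chain-rule split off $H(D_{\gamma_i}\mid\cdots)$, and then condition the residual $H(W\mid\cdots, D_{\gamma_i})$ additionally on $C_{\gamma_{i+1}}$ (which can only decrease it). Under the uniform-bits assumption, the $i$-th term evaluates to
\begin{equation*}
N_i \;:=\; \bigl|\,D_{\gamma_i}\setminus\bigl(C_{\gamma_1}\cup\cdots\cup C_{\gamma_i}\,\cup\, D_{\gamma_1}\cup\cdots\cup D_{\gamma_{i-1}}\bigr)\,\bigr|.
\end{equation*}

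The remaining task is a bookkeeping argument identifying $\sum_{i=1}^K N_i$ with the right-hand side of Proposition~\ref{prop:centralized:1}. For each source bit $b$, let $P_b\subseteq[K]$ be its set of demanders and let $Q_b\cup\{0\}$ (with $Q_b\subseteq[K]$) be its set of storers, so $b$ contributes exactly one unit to $c_{P_b}^{Q_b}$. Bit $b$ lies in $N_i$ for a unique $i$, namely the smallest index with $\gamma_i\in P_b$, and this happens exactly when additionally none of $\gamma_1,\ldots,\gamma_i$ stores $b$; equivalently, $P_b\subseteq\{\gamma_i,\ldots,\gamma_K\}$ with $\gamma_i\in P_b$, and $Q_b\subseteq\{\gamma_{i+1},\ldots,\gamma_K\}$. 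These are precisely the indexing constraints of the innermost double sum, so aggregating recovers the claimed bound.

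I expect the main obstacle to be clerical: verifying that the automatic exclusion $\gamma_i\notin Q_b$ coming from $D_{\gamma_i}\cap C_{\gamma_i}=\phi$ (implicit in Definition~\ref{dataexchangeproblemdefinition}) correctly lines up with the restriction $Q\subseteq\{\gamma_{i+1},\ldots,\gamma_K\}$ rather than $\{\gamma_i,\ldots,\gamma_K\}$ in Proposition~\ref{prop:centralized:1}, and that each bit is counted at exactly one position $i$. The entropic chain itself is a direct sequential cut-set argument and is lighter than the full induction that underlies the proof of Theorem~\ref{genericlowerboundTheorem}.
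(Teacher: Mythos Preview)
Your proposal is correct and follows essentially the same route as the paper's proof. The paper phrases the argument as an induction on $|S|$ for $S=\{\gamma_k,\dots,\gamma_K\}$, proving $H(X_0\mid Y_{\overline S})\geq \sum_{i\geq k}\cdots$, but each induction step is exactly your ``condition on $C_{\gamma_i}$, insert $D_{\gamma_i}$ via decodability, chain-rule split, repeat'' maneuver; your unrolled chain and their induction produce the identical sequence of entropy inequalities, and your bit-level bookkeeping for $N_i$ is the explicit version of the paper's one-line identification $H(D_{\gamma_k}\mid Y_{\overline S},C_{\gamma_k})=\sum_{P\ni\gamma_k}\sum_Q c_P^Q$.
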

\begin{proof}
See Appendix~\ref{app:prop:centralized:1}.
\end{proof}

A direct consequence of Proposition~\ref{prop:centralized:1} is 
\begin{equation} 
\label{eq:cent_lower_bound_gamma}
L^* \geq \max_{\gamma} \sum_{i=1}^{K} \, \sum_{\substack{P \subset \{\gamma_i,\dots,\gamma_K\} \\ \gamma_i \in P}} \, \sum_{Q \subset \{\gamma_{i+1},\dots,\gamma_K\}} c_P^Q
\end{equation} 
where the maximization is over all possible permutations on $[K]$. 

\textcolor{black}{We now recall the definition of the generalized independence number~\cite{ErrorCorrectionIndexCoding}. Denote the collection of the $c_P^Q$ information bits available exclusively at the nodes $Q \cup \{0\}$ and demanded exclusively by the nodes $P$ as $\{w_{P,m}^Q~:~m=1,\dots,c_P^Q\}$. Hence, the set of all the information bits present in the system is 
\begin{equation*}
B = \bigcup_{P \subset [K]} \bigcup_{Q \subset [K] \setminus P} \left\{w_{P,m}^Q~:~m=1,\dots,c_P^Q \right\}.
\end{equation*} 
Note that each bit is identified by a triple $(P,Q,m)$. 
\begin{definition}
A subset $\mathcal{H}$ of $B$ is a generalized independent set if and only if every subset $\mathcal{I} \subset \mathcal{H}$ satisfies the following: 
\begin{itemize}
    \item there exists a node $k \in [K]$ and an information bit in $\mathcal{I}$ such that this information bit is demanded by $k$ (and possibly some other nodes), and none of the other bits in $\mathcal{I}$ are available as side information at $k$.
\end{itemize} 
The generalized independence number $\alpha$ is the size of the largest generalized independent set.
\end{definition}
We next show that the lower bound in (\ref{eq:cent_lower_bound_gamma}) is in fact equal to the generalized independence number $\alpha$ of this index coding problem.}

\begin{theorem} \label{thm:alpha_bound}
The generalized independence number $\alpha$ satisfies
\begin{equation} \label{eq:thm:alpha_bound}
\alpha = \max_{\gamma} \sum_{i=1}^{K} \, \sum_{\substack{P \subset \{\gamma_i,\dots,\gamma_K\} \\ \gamma_i \in P}} \, \sum_{Q \subset \{\gamma_{i+1},\dots,\gamma_K\}} c_P^Q,
\end{equation}
where the maximization is over all $K!$ permutations of $[K]$.
\end{theorem}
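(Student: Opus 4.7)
I plan to prove this by establishing both inequalities, using a reformulation of the right-hand side of (\ref{eq:thm:alpha_bound}) that makes its combinatorial content transparent. For a bit $w_{P,m}^{Q}$ and a permutation $\gamma$, note that there is at most one index $i$ simultaneously satisfying $P\subset\{\gamma_i,\dots,\gamma_K\}$, $\gamma_i\in P$, and $Q\subset\{\gamma_{i+1},\dots,\gamma_K\}$, namely the position of the earliest member of $P$ in $\gamma$; such $i$ exists if and only if the earliest node of $P\cup Q$ in $\gamma$ lies in $P$. Call such a bit \emph{well-ordered} with respect to $\gamma$. Then the RHS of (\ref{eq:thm:alpha_bound}) equals the maximum, over $\gamma$, of the number of bits in $B$ that are well-ordered under $\gamma$.

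For the direction $\alpha \geq \max_\gamma(\cdots)$, fix any $\gamma$ and let $\mathcal{H}_\gamma$ be the collection of all bits well-ordered under $\gamma$. I will show $\mathcal{H}_\gamma$ is a generalized independent set. For any $\mathcal{I}\subset\mathcal{H}_\gamma$ and each $w=w_{P,m}^{Q}\in\mathcal{I}$, let $e(w)$ denote the position in $\gamma$ of the earliest member of $P\cup Q$; well-orderedness gives $\gamma_{e(w)}\in P$. Choose $w^\star$ minimizing $e(\cdot)$ over $\mathcal{I}$, and set $k=\gamma_{e(w^\star)}$, so that $k$ demands $w^\star$. For any other $w'=w_{P',m'}^{Q'}\in\mathcal{I}$ we have $e(w')\geq e(w^\star)$, so $k$ is either strictly earlier than every member of $P'\cup Q'$, or it coincides with their earliest member which by well-orderedness lies in $P'$; either way $k\notin Q'$, so $w'$ is not side information at $k$, verifying the independent-set condition.

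For the direction $\alpha \leq \max_\gamma(\cdots)$, let $\mathcal{H}$ be a generalized independent set of size $\alpha$ and construct $\gamma$ greedily. Maintain the set $U$ of already placed nodes, and call $w_{P,m}^{Q}\in\mathcal{H}$ \emph{undetermined} if $(P\cup Q)\cap U=\emptyset$; let $\mathcal{I}$ be the current set of undetermined bits. Applying the independent set property to $\mathcal{I}$ yields some $w=w_{P,m}^{Q}\in\mathcal{I}$ and a node $k\in P$ with $k\notin Q'$ for every other $w_{P',m'}^{Q'}\in\mathcal{I}$. Since $w$ is undetermined and $k\in P$, we have $k\notin U$, so $k$ is available; appending $k$ as the next coordinate of $\gamma$ renders $w$ well-ordered (the earliest placed member of $P\cup Q$ is $k\in P$) and converts no previously undetermined bit into a non-well-ordered one, because any such $w'$ has $k\notin Q'$. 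Iterating $K$ times produces a permutation under which every bit of $\mathcal{H}$ is well-ordered, so this permutation attains at least $\alpha$ on the RHS.

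The step I expect to require the most care is the greedy construction in the second direction, where one must simultaneously justify that the node $k$ supplied by the independent-set property is still unplaced (otherwise the bit $w$ would already have been determined, contradicting $w\in\mathcal{I}$) and that placing $k$ violates well-orderedness for no previously undetermined bit (handled by the no-other-side-information clause). Both observations are straightforward on their own, but the induction needs them to hold at every step, which is precisely where the full strength of requiring the independent-set condition for \emph{every} subset $\mathcal{I}$ is used.
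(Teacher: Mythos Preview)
Your proof is correct and takes essentially the same approach as the paper: for the direction $\alpha\geq\max_\gamma(\cdots)$ the paper also shows that the set of (in your language) well-ordered bits is a generalized independent set using exactly the earliest-index witness you describe, and for $\alpha\leq\max_\gamma(\cdots)$ it likewise builds the permutation greedily by repeatedly applying the independent-set property to a shrinking subcollection of $\mathcal{H}$. The only cosmetic difference is the bookkeeping in the greedy step---the paper filters $\mathcal{I}$ by removing bits \emph{demanded} by already-placed nodes (i.e., $P\cap U=\emptyset$) rather than your ``undetermined'' condition $(P\cup Q)\cap U=\emptyset$, and then bounds the sizes of the successive differences, but the mechanism is the same.
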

\begin{proof}
See Appendix~\ref{app:thm:alpha_bound}.
\end{proof}

\subsection{Relation to the index coding lower bound} \label{sub:sec:index_coding:relation}

Proposition~\ref{prop:centralized:1} serves as the platform for comparing Theorem~\ref{genericCentralized} and the $\alpha$-bound.
While $\alpha$ equals the \emph{maximum} value of the bound in Proposition~\ref{prop:centralized:1} over all permutations on $[K]$, our bound in Theorem~\ref{genericCentralized} equals the \emph{average} value of the lower bound given in Proposition~\ref{prop:centralized:1} over all permutations on $[K]$. We will show this relation between Theorem~\ref{genericCentralized} and Proposition~\ref{prop:centralized:1} now.

Taking the average of the right hand side of~\eqref{eq:prop:centralized:1} with respect to all $\gamma$, we obtain
\begin{equation*}
\frac{1}{K!} \sum_{\gamma}  ~ \sum_{i=1}^{K} \sum_{\substack{P \subset \{\gamma_i,\dots,\gamma_K\} \\ \gamma_i \in P}} \sum_{Q \subset \{\gamma_{i+1},\dots,\gamma_K\}} c_P^Q.
\end{equation*}
For each choice of $P, Q \subset [K]$ with $P \cap Q = \phi$, we now count the number of times $c_P^Q$ appears in this sum. For a given $\gamma$, the inner summations include the term $c_P^Q$ if and only if the following holds
\begin{equation*}
\gamma_i \in P, \text{ where } i = \min\{ j \in [K]~:~\gamma_j \in P \cup Q\},
\end{equation*}
i.e., if we consider the elements $\gamma_1,\dots,\gamma_K$ in that order, the first element from $P \cup Q$ to be observed in this sequence belongs to $P$. 
Thus, for a given pair $P,Q$ the probability that a permutation $\gamma$ chosen uniformly at random includes the term $c_P^Q$ in the inner summation is $|P|/(|P| + |Q|)$.
Hence, the average of the lower bound in Proposition~\ref{prop:centralized:1} over all possible $\gamma$ is
\begin{equation*}
\sum_{P \subset [K]} \sum_{\substack{Q \subset [K] \\ P \cap Q = \phi}} \frac{|P|}{|P| + |Q|} c_P^Q,
\end{equation*}
which is exactly the bound in Theorem~\ref{genericCentralized}.

Since the bound in Theorem~\ref{genericCentralized} is obtained by averaging over all $\gamma$, instead of maximizing over all $\gamma$, we conclude that this is in general weaker than the $\alpha$-bound of index coding. The two bounds are equal if and only if the bound in Proposition~\ref{prop:centralized:1} has the same value for every permutation $\gamma$. 

Although weaker in general, we note that the bound of Theorem~\ref{genericCentralized} is easier to use than the $\alpha$-bound. 
As demonstrated by~\eqref{eqn220}, in order to use Theorem~\ref{genericCentralized} we only need to know, for each information bit, the number of nodes that contain this bit and the number of nodes that demand this bit. In comparison, this information is insufficient to evaluate the $\alpha$-bound, which also requires the identities of these nodes. 



\subsection{On the tightness of Theorem~\ref{genericCentralized}}

We now consider the class of \emph{unicast} problems, i.e., problems where each bit is demanded by exactly one of the nodes. For this class of problems we characterize when Theorem~\ref{genericCentralized} yields a tight bound.

\begin{theorem} \label{thm:tightness_centralized}
For unicast problems the bound in Theorem~\ref{genericCentralized} equals $L^*$ if and only if every $S \subset [K]$ with $|S| \geq 2$ satisfies the following, $c_{\{k\}}^{S \setminus k} = c_{\{k'\}}^{S \setminus k'}$ for every $k,k' \in S$.
\end{theorem}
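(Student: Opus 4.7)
My plan is to prove both directions by leveraging the identification established in Section~\ref{sub:sec:index_coding:relation}: the Theorem~\ref{genericCentralized} bound equals the \emph{average} over permutations $\gamma$ of the Proposition~\ref{prop:centralized:1} bound, whereas Theorem~\ref{thm:alpha_bound} identifies the generalized independence number $\alpha$ as the \emph{maximum} over $\gamma$ of the very same quantity. Since $\alpha$ is a valid index-coding lower bound, $L^* \geq \alpha \geq \mathrm{avg}_\gamma = (\text{Thm.~\ref{genericCentralized} bound})$. Hence, if the Theorem~\ref{genericCentralized} bound equals $L^*$, the maximum must equal the average, which forces the Proposition~\ref{prop:centralized:1} expression to be \emph{constant} in $\gamma$. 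Conversely, once that constancy is established (so that $\alpha$ equals the Theorem~\ref{genericCentralized} bound) it remains only to exhibit a delivery scheme that meets it.

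For the forward direction, to translate ``constant in $\gamma$'' into the stated symmetry condition, I would compare two permutations differing by a single adjacent swap of entries $a,b$ at positions $i,i+1$. Setting $T = \{\gamma_{i+2},\dots,\gamma_K\}$, a direct expansion (using $|P|=1$ for unicast) reduces the difference of the two Proposition~\ref{prop:centralized:1} values to $\sum_{Q \subset T}\bigl(c_{\{a\}}^{Q\cup b} - c_{\{b\}}^{Q\cup a}\bigr)$. Requiring this to vanish for every $T \subset [K]\setminus\{a,b\}$ and every $a,b$, I would induct on $|T|$: the base case $T=\emptyset$ gives $c_{\{a\}}^{\{b\}}=c_{\{b\}}^{\{a\}}$, and at each inductive step the top-degree term of the alternating subset sum isolates a fresh equality $c_{\{a\}}^{T\cup b}=c_{\{b\}}^{T\cup a}$. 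Taking $S=T\cup\{a,b\}$ yields precisely the stated condition $c_{\{k\}}^{S\setminus k}=c_{\{k'\}}^{S\setminus k'}$ for every $k,k'\in S$. The same manipulation run in reverse shows that if the condition holds then every adjacent swap, and hence every permutation, preserves the Proposition~\ref{prop:centralized:1} value.

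For achievability under the symmetry condition, let $c_S$ denote the common value $c_{\{k\}}^{S\setminus k}$ for $k\in S$. Routine simplification of the Theorem~\ref{genericCentralized} bound yields $\sum_{k}c_{\{k\}}^\emptyset + \sum_{S\subset[K],|S|\geq 2} c_S$. I would match this with an explicit XOR scheme: for each $S$ with $|S|\geq 2$, index the $c_S$ bits demanded by each $k\in S$ (exclusively stored at $(S\setminus k)\cup\{0\}$) as $w_{k,m}^{S\setminus k}$ for $m=1,\dots,c_S$, and have the server broadcast $\bigoplus_{k\in S} w_{k,m}^{S\setminus k}$. Each receiver $j\in S$ holds every summand with $k\neq j$ as side information (because $j\in S\setminus k$) and therefore decodes $w_{j,m}^{S\setminus j}$. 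Bits counted in $c_{\{k\}}^\emptyset$ are sent uncoded. Summing yields exactly the simplified bound, giving $L^*\leq$ bound; the matching converse being immediate from Theorem~\ref{genericCentralized} completes the proof.

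The main obstacle I anticipate is the bookkeeping of the forward-direction induction: in the identity $\sum_{Q\subset T}\bigl(c_{\{a\}}^{Q\cup b}-c_{\{b\}}^{Q\cup a}\bigr)=0$, the top-degree term $c_{\{a\}}^{T\cup b}-c_{\{b\}}^{T\cup a}$ is entangled with $2^{|T|}-1$ lower-degree terms, so the inductive step must be orchestrated so that every lower-degree difference is already known to vanish before the new equality can be extracted. The algebra is elementary, but the ordering of the induction over the lattice of subsets $T$ requires care to avoid circularity.
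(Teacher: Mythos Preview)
Your proposal is correct and follows essentially the same route as the paper: for necessity, both you and the paper use $L^*\geq\alpha\geq\text{avg}$ to force constancy of the Proposition~\ref{prop:centralized:1} value over permutations, then compare two permutations differing by a single adjacent swap and induct on the size of the tail set (your difference formula $\sum_{Q\subset T}\bigl(c_{\{a\}}^{Q\cup b}-c_{\{b\}}^{Q\cup a}\bigr)=0$ is exactly the paper's identity~\eqref{eq:tightness_centralized:1} after simplification); for sufficiency, both rewrite the bound as $\sum_k c_{\{k\}}^{\emptyset}+\sum_{|S|\ge 2} c_S$ and meet it with the clique-cover XOR scheme. The induction you worry about is not circular: once the claim holds for all $S'$ with $|S'|\le |T|+1$, every term with $Q\subsetneq T$ vanishes by hypothesis, isolating the single new equality $c_{\{a\}}^{T\cup b}=c_{\{b\}}^{T\cup a}$.
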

\begin{proof}
See Appendix~\ref{app:thm:tightness_centralized}. When the lower bound of Theorem~\ref{genericCentralized} is tight, the clique-covering based index coding scheme (see \cite{BirkandKolISCOD,IndexCodingTIT}) yields the optimal communication cost.
\end{proof}

Our main result in Theorem~\ref{genericlowerboundTheorem}, or equivalently, Theorem~\ref{genericCentralized}, does not provide a tight lower bound for centralized data shuffling problem~\cite{ElmahdyMohajerCDS_TIT}, because this problem involves scenarios that do not satisfy the tightness condition of Theorem~\ref{thm:tightness_centralized}. 
For instance, consider the simple canonical data shuffling setting, where the system has exactly $K$ files, all of equal size $F$ bits, and each node stores exactly one of these files, i.e., the entirety of the contents of the $k^{\text{th}}$ node $C_k$ is the $k^{\text{th}}$ file. 
Here, $|C_k|=F$ for all $k \in [K]$, and $C_i \cap C_j = \phi$ for all $i \neq j$.
Assume that the shuffling problem is to move the file $C_{k+1}$ to node $k$, i.e., $D_{k}=C_{k+1}$, where we consider the index $K+1$ to be equal to $1$. 
This is a worst-case demand for data shuffling incurring the largest possible communication cost. For this set of demands, we have $c_{k}^{\{k+1\}}=F$ for all $k \in [K]$, and $c_{k}^{Q} = 0$ for all other choices of $k,Q$. In particular, $c_{k+1}^{\{k\}}=0 \neq c_k^{\{k+1\}}$. Clearly, the condition in Theorem~\ref{thm:tightness_centralized} does not hold for $S=\{k,k+1\}$. Hence, our lower bound is strictly less than $L^*$  for this data shuffling problem, and therefore is not tight.

\section{Conclusion}
\textcolor{black}{We have presented an information theoretic converse result for a generic data exchange problem, where the terminals contain some data in their local storage and want other data available at the local storage of other nodes. As a number of recently studied multi-terminal communication problems fall under this setting, we have used our general converse to obtain converses in many such settings, thus recovering many existing results and presenting some new results as well. Using a connection with index coding, we also presented some ideas on why and when our data exchange based converse can be loose in the index coding setting. It would be quite interesting to see if our converse result can be tightened further while still retaining a closed form expression,  so as to cover all known bounds for any existing setting that can be modelled in the data exchange framework. A lower bound for the communication load in a generic data exchange setting in the presence of coded storage bits would also be a prospective direction for future research in this area.}
\appendices
\section{Proof of Theorem \ref{genericlowerboundTheorem}}
\label{appendix:proof_main_thm}
\textcolor{black}{We assume that all the bits in the collection $B$ as in Definition \ref{dataexchangeproblemdefinition} are i.i.d uniformly distributed on $\{0,1\}.$} For a given communication scheme for the given data exchange problem, let $X_i\triangleq \phi_i(C_i)$ represent the codeword transmitted by node $i$. For a subset $S\subset [K],$ let $X_S\triangleq \cup_{i\in S}X_i$. Also, let $Y_S=\bigcup_{i\in S} (D_i\cup C_i) $. We first prove the following claim.
\begin{claim}
\label{Claim}
For any $S\subset [K]$, 
\begin{align}
\label{eqn211}
    H(X_S|Y_{\overline S})\geq 
    \sum_{P\subset S}\sum_{Q\subset S\setminus P}\frac{|P|}{|P|+|Q|-1} a_{P}^{Q},
\end{align}
where $\overline{S}=[K]\setminus S$.
\end{claim}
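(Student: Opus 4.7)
My plan is to prove Claim~\ref{Claim} by induction on $|S|$.

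\textbf{Base case $|S|=1$.} Every term on the right-hand side has $P=\phi$ or $Q=\phi$, so $a_P^Q=0$ (by the observations noted after (\ref{eqndefinitionofapq})) and the inequality holds trivially.

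\textbf{Base case $|S|=2$.} Let $S=\{i,j\}$. I use the chain rule
$$H(X_i,X_j\mid Y_{\overline{S}}) \;=\; H(X_i\mid Y_{\overline{S}}) + H(X_j\mid X_i,Y_{\overline{S}})$$
and bound each term separately. Let $B$ denote the $a_{\{j\}}^{\{i\}}$ bits demanded only by $j$ and stored only at $i$. These bits appear in no $C_\ell$ or $D_\ell$ for $\ell\in\overline{S}$, so $B$ is independent of $Y_{\overline{S}}$; and since they appear in no $C_\ell$ for $\ell\neq i$, the transmissions $X_\ell$ with $\ell\neq i$ are independent of $B$. The decoding of $D_j$ from $C_j$ and $\{X_\ell:\ell\neq j\}$ then forces $B$ to be extractable once $X_i$ is supplied, and a direct mutual-information calculation yields $H(X_i\mid Y_{\overline{S}})\geq a_{\{j\}}^{\{i\}}$. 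For the second term I condition further on $C_i$ inside the conditional entropy (a valid lower bound since $X_i$ is a function of $C_i$, so the additional conditioning only decreases it), and repeat the symmetric argument for the $a_{\{i\}}^{\{j\}}$ bits demanded only by $i$ and stored only at $j$, obtaining $H(X_j\mid X_i,Y_{\overline{S}})\geq a_{\{i\}}^{\{j\}}$. Adding gives the claim for $|S|=2$.

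\textbf{Inductive step $|S|\geq 3$.} For any $k\in S$, the chain rule combined with the fact that $X_k$ is a function of $C_k\subseteq Y_k$ yields
$$H(X_S\mid Y_{\overline{S}}) \;=\; H(X_k\mid Y_{\overline{S}})+H(X_{S\setminus k}\mid X_k,Y_{\overline{S}}) \;\geq\; H(X_k\mid Y_{\overline{S}})+H(X_{S\setminus k}\mid Y_{\overline{S\setminus k}}),$$
and the inductive hypothesis applied to the second summand gives a lower bound involving only pairs $(P,Q)\subset S\setminus k$. Summing this over $k\in S$ and swapping summation orders, each pair $(P,Q)\subset S$ with $P\cap Q=\phi$ inherits the coefficient $(|S|-|P|-|Q|)/|S|$ from the averaged inductive contribution. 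The induction therefore reduces to showing the matching symmetric bound
$$\sum_{k\in S} H(X_k\mid Y_{\overline{S}}) \;\geq\; \sum_{\substack{P,Q\subset S\\ P\cap Q=\phi}} (|P|+|Q|)\cdot\frac{|P|}{|P|+|Q|-1}\,a_P^Q,$$
which, together with the averaged inductive bound, exactly reproduces the target coefficient $|P|/(|P|+|Q|-1)$ for every pair $(P,Q)$.

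\textbf{Main obstacle.} The delicate step is establishing this last symmetric sum inequality. A naive per-$k$ base-case-style argument only yields $H(X_k\mid Y_{\overline{S}})\geq\sum_{\phi\neq P\subset S\setminus k} a_P^{\{k\}}$, which captures only bits exclusively stored at $k$, i.e.\ the $|Q|=1$ contributions. To account for bits shared across several storage nodes---where any of the $|Q|$ nodes in $Q$ can act as the source---a further averaging argument, most naturally over orderings of the nodes in $P\cup Q$, seems necessary. This is analogous in spirit to the permutation argument used in Proposition~\ref{prop:centralized:1} for the centralized setting (which produces the coefficient $|P|/(|P|+|Q|)$); the $-1$ in the denominator of Claim~\ref{Claim} reflects that in the distributed case one node of $Q$ must itself play the transmitter role, a role taken by the external server in the centralized case. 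Calibrating the averaging so that the resulting bit-by-bit coefficients exactly match $|P|/(|P|+|Q|-1)$ is the combinatorial crux of the proof.
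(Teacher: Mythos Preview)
Your reduction is clean up to the ``symmetric sum inequality''
\[
\sum_{k\in S} H(X_k\mid Y_{\overline{S}}) \;\geq\; \sum_{\substack{P,Q\subset S\\ P\cap Q=\phi}} (|P|+|Q|)\,\frac{|P|}{|P|+|Q|-1}\,a_P^Q,
\]
but this inequality is \emph{false}, so the approach cannot be completed. Take $K=3$, $S=\{1,2,3\}$, and a single information bit $b$ with $P=\{1\}$, $Q=\{2,3\}$, so $a_{\{1\}}^{\{2,3\}}=1$ and all other $a_P^Q=0$. The scheme $X_2=b$, $X_1=X_3=\phi$ is valid, and gives $\sum_k H(X_k\mid Y_{\overline S})=1$, while your right-hand side equals $(1+2)\cdot\tfrac{1}{2}=\tfrac{3}{2}$. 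The problem is structural: $H(X_k\mid Y_{\overline S})$ only reflects what node $k$ \emph{transmits}, and for a bit stored at several nodes in $Q$ any single one may carry it; no averaging over orderings can manufacture the missing mass, because the inequality already fails for a fixed scheme.

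The paper's proof avoids this trap by not trying to lower bound $\sum_k H(X_k\mid Y_{\overline S})$ at all. Instead it applies subadditivity in the reverse direction, $\sum_k H(X_k\mid Y_{\overline S})\geq H(X_S\mid Y_{\overline S})$, and moves this term back to the left-hand side, converting the averaging factor $1/|S|$ into $1/(|S|-1)$. The remaining conditional term is $H(X_{S\setminus k}\mid Y_{\overline S},C_k)$, with conditioning on $C_k$ but \emph{not} on $D_k$; one then inserts $D_k$ for free via the decoding condition and splits off $H(D_k\mid C_k,Y_{\overline S})$ before passing to $H(X_{S\setminus k}\mid Y_{\overline{S\setminus k}})$. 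This $H(D_k\mid\cdot)$ term is exactly $\sum_{P\ni k}\sum_{Q\subset S\setminus P} a_P^Q$: it counts every bit demanded by $k$ regardless of how many nodes in $S$ store it, which is precisely the contribution your decomposition loses when you condition on $D_k$ too early.
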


Applying $S=[K]$ to the above claim then gives Theorem \ref{genericlowerboundTheorem}, as $L^*\geq H(X_{[K]}).$ 

Now we prove Claim \ref{Claim}. For this, we use induction on $|S|.$ 

We take the base case to be $|S|=2$, as for $|S|=1$ the problem of data exchange is not well defined. Let $S=\{1,2\}$ without loss of generality. Then  LHS of (\ref{eqn211}) gives 
\begin{align}
\nonumber
H(X_1,X_2|Y_{\overline S})&\geq H(X_1|Y_{\overline S})+H(X_2|Y_{\overline S},X_1)\\ 
\label{eqn218}
&\geq H(X_1|Y_{\overline S},C_2)+H(X_2|Y_{\overline S},C_1)\\
\label{eqn219}
&= H(X_1,D_2|Y_{\overline S},C_2)+H(X_2,D_1|Y_{\overline S},C_1)\\
&\geq H(D_2|Y_{{\overline S}},C_2)+H(D_1|Y_{{\overline S}},C_1)\\
& \geq a_{\{2\}}^{\{1\}}+a_{\{1\}}^{\{2\}},
\end{align}
where (\ref{eqn218}) follows as conditioning reduces entropy and $H(X_1|C_1)=0$, (\ref{eqn219}) is true as $H(D_2|Y_{\overline S},C_2,X_1)=0$ and $H(D_1|Y_{\overline S},C_1,X_2)=0$. This proves the base case. 

We now assume that the statement is true for $|S|=t-1,$ and prove that it holds for $|S|=t$. We have the LHS of (\ref{eqn211}) satisfying the following relationships for $|S|=t$. 
\begin{align}
    \nonumber
    H(X_S|Y_{\overline S})&=\frac{1}{t}\sum_{k\in S}\left(H(X_{S\setminus k}|Y_{\overline S},X_k)+H(X_k|Y_{\overline S})\right)\\
     \label{eqn212}
     &\geq\frac{1}{t}\left(\sum_{k\in S}H(X_{S\setminus k}|Y_{\overline S},C_k)\right)+\frac{1}{t}H(X_S|Y_{\overline S})\\
     \nonumber
     &\geq\frac{1}{t-1}\sum_{k\in S}H(X_{S\setminus k}|Y_{\overline S},C_k)\\
     \label{eqn213}
     &=\frac{1}{t-1}\sum_{k\in S}H(X_{S\setminus k},D_k|Y_{\overline S},C_k)\\
     \label{eqn214}
      & \textstyle =\frac{1}{t-1}\sum_{k\in S}\left(\textstyle H(D_k|C_k,Y_{\overline S})+H(X_{S\setminus k}|Y_{\overline{S\setminus k}})\right),
\end{align}
where (\ref{eqn212}) follows because $H(X_k|C_k)=0$. In (\ref{eqn213}), we introduce $D_k$ freely, because 
\begin{align*}H(D_k|X_{S\setminus k},Y_{\overline S},C_k)&\leq H(D_k|X_{S\setminus k},C_{\overline S},C_k)\\
&\leq H(D_k|X_{S\setminus k},X_{\overline S},C_k)\\
&=0,
\end{align*}
where the last two statements follow because $H(X_{\overline S}|C_{\overline S})=0$ and from the decoding condition, respectively.
We now interpret the two terms of (\ref{eqn214}). For the first term, we have,
\begin{align}
\nonumber
    \sum_{k\in S}H(D_k|C_k,Y_{\overline S}) = &\sum_{k\in S} \sum_{P'\subset S\setminus k}\sum_{Q\subset S\setminus (P'\cup k)}a_{k\cup P'}^{Q}\\
    \label{eqn215}
    &=\sum_{P\subset S}\sum_{Q\subset S\setminus P}|P|a_{P}^{Q},
\end{align}
where the last statement follows by noting that for a fixed choice of $P,Q$ we have $|P|$ choices for $(k,P')$ such that $P'\cup k=P$. 

Now, using the induction hypothesis for the last term of (\ref{eqn214}),
\begin{align}
\nonumber
   \sum_{k\in S}&H(X_{S\setminus k}|Y_{\overline{S\setminus k}})\\
   &\geq \sum_{k\in S} \sum_{P\subset S\setminus k}\sum_{Q\subset S\setminus (P\cup k)}\frac{|P|}{|P|+|Q|-1} a_{P}^{Q},\\
   \label{eqn216}
    &=\sum_{P\subset S}\sum_{Q\subset S\setminus P}(t-|P|-|Q|)\frac{|P|}{|P|+|Q|-1} a_{P}^{Q}
\end{align}
where the above follows by noting that for a fixed choice of disjoint subsets $P,Q$ of $S$, we have $|S|-|P|-|Q|$ choices for $k$ such that $P\subset S\setminus k$ and $Q\subset S\setminus (P\cup k).$

Using (\ref{eqn215}) and (\ref{eqn216}) we have
\begin{align*}
\text{RHS~}&\text{of (\ref{eqn214})}\\
&\geq \frac{1}{t-1} \sum_{P\subset S}\sum_{Q\subset S\setminus P}|P|\left(1+\frac{t-|P|-|Q|}{|P|+|Q|-1}\right)a_P^Q\\
&=\sum_{P\subset S}\sum_{Q\subset S\setminus P}\frac{|P|}{|P|+|Q|-1}a_P^Q,
\end{align*}
thus proving Claim \ref{Claim}, which also concludes the proof of the theorem. 
\section{Proof of Proposition~\ref{prop:centralized:1}} \label{app:prop:centralized:1}

We will continue to use the notations used in the proof of Theorem~\ref{genericlowerboundTheorem}. 
Let $k \in [K]$, and $S=\{\gamma_k,\dots,\gamma_K\}$. Note that ${\overline S}=\{\gamma_1,\dots,\gamma_{k-1}\}$.
We will prove by induction on $|S|=K-k+1$ that
\begin{equation} \label{eq:induction_centralized}
H(X_0 | Y_{\overline S}) \geq \sum_{i=k}^{K} \, \sum_{\substack{P \subset \{\gamma_i,\dots,\gamma_K\} \\ \gamma_i \in P}} \, \sum_{Q \subset \{\gamma_{i+1},\dots,\gamma_K\}} c_P^Q. 
\end{equation} 
Then the result claimed in the proposition follows by using $S=\{\gamma_1,\dots,\gamma_K\} = [K]$, i.e., $k=1$.
When $|S|=1$, i.e., $k=K$ and $S= \{\gamma_K\}$, clearly~\eqref{eq:induction_centralized} is true, since 
\begin{equation*}
H(X_0|Y_{ \{\gamma_1,\dots,\gamma_{K-1}\} }) \geq a_{\{\gamma_K\}}^{\{0\}} = c_S^{\phi}.
\end{equation*} 
Now consider $S=\{\gamma_k,\dots,\gamma_K\}$. The induction hypothesis is
\begin{equation} \label{eq:centralized_proof:1}
H(X_0 | Y_{\overline{S \setminus \gamma_k}}) \geq \sum_{i=k+1}^{K} \, \sum_{\substack{P \subset \{\gamma_i,\dots,\gamma_K\} \\ \gamma_i \in P}} \, \sum_{Q \subset \{\gamma_{i+1},\dots,\gamma_K\}} c_P^Q. 
\end{equation}  
Using the fact $H(D_{\gamma_k} | Y_{\overline S}, C_{\gamma_k}, X_0) \leq H(D_{\gamma_k} | C_{\gamma_k}, X_0)=0$, we have
\begin{align}
H(X_0|Y_{\overline S}) &\geq H(X_0 | Y_{\overline S}, C_{\gamma_k}) \nonumber \\
&= H(X_0 | Y_{\overline S}, C_{\gamma_k}) + H(D_{\gamma_k} | Y_{\overline S}, C_{\gamma_k}, X_0) \nonumber \\
&= H(X_0, D_{\gamma_k} | Y_{\overline S}, C_{\gamma_k}) \nonumber \\
&= H(D_{\gamma_k} | Y_{\overline S}, C_{\gamma_k}) + H(X_0|C_{\gamma_k}, D_{\gamma_k},Y_{\overline S}) \nonumber \\
&= \sum_{\substack{P \subset \{\gamma_k,\dots,\gamma_K\} \\ \gamma_k \in P}} \, \sum_{Q \subset \{\gamma_{k+1},\dots,\gamma_K\}} \!\!\!\!\!\!\!\! c_P^Q + H(X_0|Y_{\overline{S \setminus \gamma_k}}) \label{eq:centralized_proof:2}
\end{align} 
We observe that~\eqref{eq:induction_centralized} follows from~\eqref{eq:centralized_proof:1} and~\eqref{eq:centralized_proof:2}.

\section{Proof of Theorem~\ref{thm:alpha_bound}} \label{app:thm:alpha_bound}
We prove this theorem by showing that $\alpha$ is both upper and lower bounded by the right hand side of~\eqref{eq:thm:alpha_bound}.

\emph{Upper Bound:} 
Assume that $\mathcal{H}$ is a largest generalized independent set. We will now identify a permutation $\pi=(\pi_1,\dots,\pi_K)$ corresponding to $\mathcal{H}$.
Let $\mathcal{I}_1=\mathcal{H}$, and observe that since $\mathcal{I}_1$ is itself a subset of $\mathcal{H}$, it must contain an information bit, say $w_{P,m}^Q$ that is demanded by a node, say $\pi_1$, and none of the bits in $\mathcal{I}_1 \setminus \{w_{P,m}^Q\}$ is available as side information at $\pi_1$.
For $k=2,\dots,K$, we sequentially identify $\pi_k$ as follows. We first define
\begin{equation*}
\mathcal{I}_k = \mathcal{H} ~~\setminus~~ \bigcup_{i < k} \bigcup_{P: \pi_i \in P} \bigcup_{Q \subset [K] \setminus P} \left\{w_{P,m}^Q~:~m=1,\dots,c_P^Q \right\},
\end{equation*}
which is $\mathcal{H}$ minus all the bits demanded by any of $\pi_1,\dots,\pi_{k-1}$. 
Thus, any bit in $\mathcal{I}_k$ is demanded by one or more of the nodes in $[K] \setminus \{\pi_1,\dots,\pi_{k-1}\}$.
Since $\mathcal{I}_k \subset \mathcal{H}$, it contains an information bit such that this bit is demanded by a node, say $\pi_k \in [K] \setminus \{\pi_1,\dots,\pi_{k-1}\}$, and the rest of $\mathcal{I}_k$ is not available as side information at $\pi_k$.

Observe that $\mathcal{H}=\mathcal{I}_1 \supset \mathcal{I}_2 \supset \cdots \supset \mathcal{I}_K$, and $\mathcal{I}_k \setminus \mathcal{I}_{k+1}$ is the set of bits in $\mathcal{H}$ that are demanded by $\pi_k$ but not by any of the nodes in $\pi_1,\dots,\pi_{k-1}$. Thus, 
\begin{equation*}
\mathcal{I}_1 \setminus \mathcal{I}_2,~\mathcal{I}_2 \setminus \mathcal{I}_3,\dots,~\mathcal{I}_{K-1} \setminus \mathcal{I}_K,~\mathcal{I}_K 
\end{equation*} 
form a partition of $\mathcal{H}$.
Here we have abused the notation to denote $\mathcal{I}_K$ by $\mathcal{I}_K \setminus \mathcal{I}_{K+1}$.
We also observe that for any choice of $k'$ none of the bits of $\mathcal{I}_{k'}$ is available as side information at $\pi_{k'}$. If $k > k'$, since $\mathcal{I}_{k} \subset \mathcal{I}_{k'}$, we deduce that none of the bits in $\mathcal{I}_{k}$ is available as side information at $\pi_{k'}$. Thus, we conclude that each bit in $\mathcal{I}_k \setminus \mathcal{I}_{k+1}$ is demanded by $\pi_k$ and is neither demanded by and nor available as side information at any of $\pi_1,\dots,\pi_{k-1}$.
Hence, $|\mathcal{I}_k \setminus \mathcal{I}_{k+1}|$ is upper bounded by the number of bits exclusively demanded by $\pi_k$ and possibly some subset of $\{\pi_{k+1},\dots,\pi_K\}$ and which are also exclusively available at some subset of $\{\pi_{k+1},\dots,\pi_K\}$, i.e.,
\begin{equation*}
|\mathcal{I}_k \setminus \mathcal{I}_{k+1}| \leq \sum_{\substack{P \subset \{\pi_k,\dots,\pi_K\} \\ \pi_k \in P}} \, \sum_{Q \subset \{\pi_{k+1},\dots,\pi_K\}} c_P^Q.
\end{equation*} 
This provides us the following upper bound,
\begin{align*}
\alpha &= |\mathcal{H}| = \sum_{k=1}^{K} |\mathcal{I}_k \setminus \mathcal{I}_{k+1}| \\
&\leq \sum_{k=1}^{K} \sum_{\substack{P \subset \{\pi_k,\dots,\pi_K\} \\ \pi_k \in P}} \, \sum_{Q \subset \{\pi_{k+1},\dots,\pi_K\}} c_P^Q \\
&\leq \max_{\gamma} ~~ \sum_{k=1}^{K} \sum_{\substack{P \subset \{\gamma_k,\dots,\gamma_K\} \\ \gamma_k \in P}} \, \sum_{Q \subset \{\gamma_{k+1},\dots,\gamma_K\}} c_P^Q,
\end{align*} 
where the maximization is over all permutations $\gamma$ of $[K]$.

\emph{Lower bound:} 
We derive the lower bound by showing that, for any permutation $\gamma$, the set
\begin{equation*}
\mathcal{H} = \bigcup_{k=1}^{K} \bigcup_{\substack{P \subset \{\gamma_k,\dots,\gamma_K\} \\ \gamma_k \in P}} \bigcup_{\substack{Q \subset \\ \{\gamma_{k+1},\dots,\gamma_K\}}} \left\{w_{P,m}^Q~:~m=1,\dots,c_P^Q \right\}
\end{equation*} 
is a generalized independent set. Then,
\begin{equation*}
\alpha \geq \max_{\gamma} |\mathcal{H}| = \max_{\gamma} ~~ \sum_{k=1}^{K} \sum_{\substack{P \subset \{\gamma_k,\dots,\gamma_K\} \\ \gamma_k \in P}} \, \sum_{Q \subset \{\gamma_{k+1},\dots,\gamma_K\}} c_P^Q.
\end{equation*} 

To show that $\mathcal{H}$ is a generalized independent set, consider any subset $\mathcal{I} \subset \mathcal{H}$. Let $k$ be the smallest integer such that $\mathcal{I}$ contains an information bit $w_{P,m}^Q$ with $\gamma_k \in P$, i.e., $k$ is the smallest integer such that $\gamma_k$ demands some information bit in $\mathcal{I}$. Hence, any other bit $w_{P',m'}^{Q'}$ in $\mathcal{I}$ must satisfy
\begin{equation*}
P' \subset \{\gamma_{k'},\dots,\gamma_K\}, Q' \subset \{\gamma_{k'+1},\dots,\gamma_K\} \text{ for some } k' \geq k.
\end{equation*} 
Clearly, this bit is not available as side information at $\gamma_k$.
Hence $\mathcal{H}$ is a generalized independent set.

\section{Proof of Theorem~\ref{thm:tightness_centralized}} \label{app:thm:tightness_centralized}

For unicast problems $c_P^Q > 0$ only if $|P|=1$. We abuse the notation mildly and use $c_k^Q$ to denote $c_{\{k\}}^Q$.

\emph{The Necessity Part:} 
The lower bound of Proposition~\ref{prop:centralized:1} for unicast problems is $\sum_{i=1}^{K} \sum_{Q \subset \{\gamma_{i+1},\dots,\gamma_K\}} c_{\gamma_i}^Q$. For brevity, we will denote this sum as $f(\gamma)$. 
For Theorem~\ref{genericCentralized} to be tight it is necessary that the bound of this theorem be equal to $\alpha$, i.e., the value of $f$ be the same for all permutations $\gamma$. 

We first show that $c_i^{\{j\}}=c_j^{\{i\}}$ for any $i \neq j$. Consider two permutations $\gamma$ and $\pi$ which differ only in the two coordinates $K-1,K$, given as, $\gamma_{K-1}=i,\gamma_K=j$ and $\pi_{K-1}=j,\pi_K=i$. Then
\begin{align*}
0&=f(\gamma) - f(\pi) = c_i^{\{j\}} + c_i^{\phi} + c_j^{\phi} - c_j^{\{i\}} - c_j^{\phi} - c_i^{\phi} \\
&=c_i^{\{j\}} - c_j^{\{i\}}.
\end{align*} 
This proves the result for $|S|=2$.
Next, we will assume that the necessity part is true for any $S \subset [K]$ of size less than or equal to $t$, and use induction to prove the result for $|S|=t+1$.

Given any $(t+1)$-set $S \subset [K]$ and any $k,k' \in S$ we will now show that $c_k^{S \setminus k} = c_{k'}^{S \setminus k'}$.  
Consider two permutations $\gamma,\pi$ that differ only in the two coordinates $K-t,K-t+1$, and
\begin{align*}
&~~\gamma_{K-t},\dots,\gamma_{K},\pi_{K-t},\dots,\pi_{K} \in S, \\
&\gamma_{K-t} = k, \gamma_{K-t+1}=k', ~\text{ and }~ \pi_{K-t}=k', \pi_{K-t+1}=k.
\end{align*} 
We observe that $S=\{\gamma_{K-t},\hdots,\gamma_K\}=\{\pi_{K-t},\hdots,\pi_{K}\},$ and 
\begin{align}
0 &= f(\gamma) - f(\pi)  \nonumber  \\
&=\sum_{Q \subset \{\gamma_{K-t+1,\dots,\gamma_K}\}} \!\!\!\!\!\!\! c_k^Q ~~~+ \!\!\!\! \sum_{Q \subset \{\gamma_{K-t+2,\dots,\gamma_K}\}} \!\!\!\!\!\!\! c_{k'}^Q - \!\!\!\!\!\! \sum_{Q \subset \{\pi_{K-t+1,\dots,\pi_K}\}} \!\!\!\!\!\!\! c_{k'}^Q ~~~- \!\!\!\! \sum_{Q \subset \{\pi_{K-t+2,\dots,\pi_K}\}} \!\!\!\!\!\!\! c_{k}^Q \nonumber \\
&= \sum_{Q \subset S \setminus k} c_k^Q + \sum_{Q \subset S \setminus \{k,k'\}}\!\!\! c_{k'}^Q - \sum_{Q \subset S \setminus k'} c_{k'}^Q - \sum_{Q \subset S \setminus \{k,k'\}} \!\!\! c_k^Q. \label{eq:tightness_centralized:1}
\end{align} 

We now argue that except for the two terms $c_k^{S \setminus k}$ and $-c_{k'}^{S \setminus k'}$ all other terms in~\eqref{eq:tightness_centralized:1} cancel out. Consider any term in the first summation of~\eqref{eq:tightness_centralized:1} with $|Q| \leq t-1$. If $k' \in Q$, then by the induction hypothesis, the term $-c_{k'}^{Q \cup k \setminus k'}$ present in the third summation will cancel $c_k^Q$. If $k' \notin Q$, then $k,k' \notin Q$, and the term $-c_k^Q$ in the fourth summation will cancel $c_k^Q$.
Similarly, every term $c_{k'}^Q$ in the second summation will cancel the corresponding term $-c_{k'}^Q$ in the third summation.
It is straightforward to observe that these correspondences between the positive and negative terms are unique, and hence, we are left with
$0 = c_k^{S \setminus k} - c_{k'}^{S \setminus k'}$.

\emph{The Sufficiency Part:}
The lower bound in Theorem~\ref{genericCentralized} is 
\begin{align*}
L^* \geq \sum_{k=1}^{K} \sum_{Q \subset [K] \setminus k} \frac{1}{1 + |Q|} c_k^Q 
= \sum_{k=1}^{K} c_k^{\phi} + \sum_{\substack{S \subset [K] \\ |S| \geq 2}} \sum_{k \in S} \frac{c_k^{S \setminus k}}{|S|}.
\end{align*} 
This lower bound can be met by a scheme that uses a combination of uncoded transmission and clique covering.
All the bits that are not available at any of the $K$ clients are transmitted uncoded incurring the cost $\sum_{k=1}^{K} c_k^{\phi}$. 
For every $S \subset [K]$ with $|S| \geq 2$, the encoder constructs $|S|$ vectors, one corresponding to each $k \in S$ and broadcasts the XOR of these vectors to the clients. The vector for $k \in S$ consists of the $c_k^{S \setminus k}$ bits demanded by node $k$ and available at nodes $S \setminus k$. All these $|S|$ vectors have the same length $c_k^{S \setminus k}$. These coded transmissions incur an additional cost $\sum_{\substack{S \subset [K] \\ |S| \geq 2}} \sum_{k \in S} \frac{c_k^{S \setminus k}}{|S|}$, thereby achieving the lower bound. This is the well known clique-covering index coding scheme (see \cite{BirkandKolISCOD,BipartiteIndexCoding}) and these transmissions allow the clients to decode their demands.
\bibliographystyle{IEEEtran}
\bibliography{biblio.bib}

\begin{thebibliography}{10}
\providecommand{\url}[1]{#1}
\csname url@samestyle\endcsname
\providecommand{\newblock}{\relax}
\providecommand{\bibinfo}[2]{#2}
\providecommand{\BIBentrySTDinterwordspacing}{\spaceskip=0pt\relax}
\providecommand{\BIBentryALTinterwordstretchfactor}{4}
\providecommand{\BIBentryALTinterwordspacing}{\spaceskip=\fontdimen2\font plus
\BIBentryALTinterwordstretchfactor\fontdimen3\font minus
  \fontdimen4\font\relax}
\providecommand{\BIBforeignlanguage}[2]{{%
\expandafter\ifx\csname l@#1\endcsname\relax
\typeout{** WARNING: IEEEtran.bst: No hyphenation pattern has been}%
\typeout{** loaded for the language `#1'. Using the pattern for}%
\typeout{** the default language instead.}%
\else
\language=\csname l@#1\endcsname
\fi
#2}}
\providecommand{\BIBdecl}{\relax}
\BIBdecl

\bibitem{MaN}
M.~A. {Maddah-Ali} and U.~{Niesen}, ``Fundamental limits of caching,''
  \emph{IEEE Transactions on Information Theory}, vol.~60, no.~5, pp.
  2856--2867, May 2014.

\bibitem{CodedMapreduce}
S.~{Li}, M.~A. {Maddah-Ali}, and A.~S. {Avestimehr}, ``{Coded MapReduce},'' in
  \emph{2015 53rd Annual Allerton Conference on Communication, Control, and
  Computing (Allerton)}, Sep. 2015, pp. 964--971.

\bibitem{FundLimitsDistribCom}
S.~{Li}, M.~A. {Maddah-Ali}, Q.~{Yu}, and A.~S. {Avestimehr}, ``A fundamental
  tradeoff between computation and communication in distributed computing,''
  \emph{IEEE Transactions on Information Theory}, vol.~64, no.~1, pp. 109--128,
  Jan 2018.

\bibitem{speedingup}
K.~{Lee}, M.~{Lam}, R.~{Pedarsani}, D.~{Papailiopoulos}, and K.~{Ramchandran},
  ``Speeding up distributed machine learning using codes,'' \emph{IEEE
  Transactions on Information Theory}, vol.~64, no.~3, pp. 1514--1529, March
  2018.

\bibitem{FundLimitsofDecentralizedDataShuffling}
K.~{Wan}, D.~{Tuninetti}, M.~{Ji}, G.~{Caire}, and P.~{Piantanida},
  ``Fundamental limits of decentralized data shuffling,'' \emph{IEEE
  Transactions on Information Theory}, vol.~66, no.~6, pp. 3616--3637, 2020.

\bibitem{ElmahdyMohajerCDS_TIT}
A.~{Elmahdy} and S.~{Mohajer}, ``On the fundamental limits of coded data
  shuffling for distributed machine learning,'' \emph{IEEE Transactions on
  Information Theory}, vol.~66, no.~5, pp. 3098--3131, 2020.

\bibitem{CodedDataRebalancing}
P.~{Krishnan}, V.~{Lalitha}, and L.~{Natarajan}, ``Coded data rebalancing:
  Fundamental limits and constructions,'' in \emph{2020 IEEE International
  Symposium on Information Theory (ISIT)}, 2020.

\bibitem{Salim_CoopDataExch}
S.~{El Rouayheb}, A.~{Sprintson}, and P.~{Sadeghi}, ``On coding for cooperative
  data exchange,'' in \emph{2010 IEEE Information Theory Workshop on
  Information Theory (ITW 2010, Cairo)}, 2010, pp. 1--5.

\bibitem{ErrorCorrectionIndexCoding}
S.~H. {Dau}, V.~{Skachek}, and Y.~M. {Chee}, ``Error correction for index
  coding with side information,'' \emph{IEEE Transactions on Information
  Theory}, vol.~59, no.~3, pp. 1517--1531, 2013.

\bibitem{QYuSli_OptimalCDC}
Q.~Yu, S.~Li, M.~A. Maddah-Ali, and A.~S. Avestimehr, ``How to optimally
  allocate resources for coded distributed computing?'' in \emph{2017 IEEE
  International Conference on Communications (ICC)}, 2017, pp. 1--7.

\bibitem{umbrellaITW}
P.~{Krishnan}, L.~{Natarajan}, and V.~{Lalitha}, ``An umbrella converse for
  data exchange: Applied to caching, computing, shuffling \& rebalancing,'' in
  \emph{2020 IEEE Information Theory Workshop (To appear)}, 2020.

\bibitem{CodedCachingOptimality}
{Kai Wan}, D.~{Tuninetti}, and P.~{Piantanida}, ``On the optimality of uncoded
  cache placement,'' in \emph{2016 IEEE Information Theory Workshop (ITW)},
  2016, pp. 161--165.

\bibitem{WirelessD2DNwsCodedCaching}
M.~{Ji}, G.~{Caire}, and A.~F. {Molisch}, ``Fundamental limits of caching in
  wireless d2d networks,'' \emph{IEEE Transactions on Information Theory},
  vol.~62, no.~2, pp. 849--869, 2016.

\bibitem{UnequalCacheSizesPaper}
B.~{Asadi}, L.~{Ong}, and S.~J. {Johnson}, ``Centralized caching with unequal
  cache sizes,'' in \emph{2018 IEEE Information Theory Workshop (ITW)}, 2018,
  pp. 1--5.

\bibitem{CentralizedCodedCachingwithHeterogeneousCacheSizes}
A.~M. {Ibrahim}, A.~A. {Zewail}, and A.~{Yener}, ``Centralized coded caching
  with heterogeneous cache sizes,'' in \emph{2017 IEEE Wireless Communications
  and Networking Conference (WCNC)}, 2017, pp. 1--6.

\bibitem{OptimalityD2D}
C.~{Yapar}, K.~{Wan}, R.~F. {Schaefer}, and G.~{Caire}, ``On the optimality of
  d2d coded caching with uncoded cache placement and one-shot delivery,''
  \emph{IEEE Transactions on Communications}, vol.~67, no.~12, pp. 8179--8192,
  2019.

\bibitem{D2DHeterogeneousCacheSizes}
A.~M. {Ibrahim}, A.~A. {Zewail}, and A.~{Yener}, ``Device-to-device coded
  caching with heterogeneous cache sizes,'' in \emph{2018 IEEE International
  Conference on Communications (ICC)}, 2018, pp. 1--6.

\bibitem{MultipleFileRequestsUlukus}
Y.~{Wei} and S.~{Ulukus}, ``Coded caching with multiple file requests,'' in
  \emph{2017 55th Annual Allerton Conference on Communication, Control, and
  Computing (Allerton)}, 2017, pp. 437--442.

\bibitem{SharedCachesElia}
E.~{Parrinello}, A.~{Ünsal}, and P.~{Elia}, ``Fundamental limits of coded
  caching with multiple antennas, shared caches and uncoded prefetching,''
  \emph{IEEE Transactions on Information Theory}, vol.~66, no.~4, pp.
  2252--2268, 2020.

\bibitem{DecentralizedCodedCachingpaper}
M.~A. {Maddah-Ali} and U.~{Niesen}, ``Decentralized coded caching attains
  order-optimal memory-rate tradeoff,'' \emph{IEEE/ACM Transactions on
  Networking}, vol.~23, no.~4, pp. 1029--1040, 2015.

\bibitem{YMAExactRateMemoryTradeoff}
Q.~{Yu}, M.~A. {Maddah-Ali}, and A.~S. {Avestimehr}, ``The exact rate-memory
  tradeoff for caching with uncoded prefetching,'' \emph{IEEE Transactions on
  Information Theory}, vol.~64, no.~2, pp. 1281--1296, 2018.

\bibitem{OntheOptimalityofTwoDecentralizedNujoom}
N.~S. {Karat}, K.~L.~V. {Bhargav}, and B.~S. {Rajan}, ``On the optimality of
  two decentralized coded caching schemes with and without error correction,''
  in \emph{2020 IEEE International Symposium on Information Theory (ISIT)},
  2020, pp. 1664--1669.

\bibitem{IndexCodingTIT}
Z.~{Bar-Yossef}, Y.~{Birk}, T.~S. {Jayram}, and T.~{Kol}, ``Index coding with
  side information,'' \emph{IEEE Transactions on Information Theory}, vol.~57,
  no.~3, pp. 1479--1494, 2011.

\bibitem{BirkandKolISCOD}
Y.~Birk and T.~Kol, ``Informed-source coding-on-demand (iscod) over broadcast
  channels,'' in \emph{Proceedings. IEEE INFOCOM '98, the Conference on
  Computer Communications. Seventeenth Annual Joint Conference of the IEEE
  Computer and Communications Societies. Gateway to the 21st Century (Cat.
  No.98}, vol.~3, 1998, pp. 1257--1264 vol.3.

\bibitem{BipartiteIndexCoding}
A.~S. Tehrani, A.~G. Dimakis, and M.~J. Neely, ``Bipartite index coding,'' in
  \emph{2012 IEEE International Symposium on Information Theory Proceedings},
  2012, pp. 2246--2250.

\end{thebibliography}
\end{document}